\newcommand{\one}{\mathbbm{1}}
\newtheorem{definition}{Definition}
\newtheorem{proposition}{Proposition}
\newcommand{\dop}[1]{{\mathsf{d}[#1]}}
\newcommand{\Acal}{\mathcal{A}}
\newcommand{\Bcal}{\mathcal{B}}
\newcommand{\Xcal}{\mathcal{X}}
\newcommand{\Ycal}{\mathcal{Y}}
\newcommand{\Bcalov}{{\overline{\mathcal{B}}}}
\newcommand{\Ycalov}{{\overline{\mathcal{Y}}}}
\newcommand{\Hcal}{\mathcal{H}}
\newcommand{\Hop}[1]{{\mathsf{H}(#1)}}
\newcommand{\Dop}[1]{{\mathsf{D}(#1)}}
\newcommand{\Hplusop}[1]{{\mathsf{H}_+(#1)}}
\newcommand{\Isf}{\mathsf{1}}
\newcommand{\Csf}{\mathsf{C}}
\newcommand{\Qsf}{\mathsf{Q}}
\newcommand{\Cbb}{\mathbbm{C}}
\newcommand{\Top}[1]{{\mathsf{T}[#1]}}
\newcommand{\Jbf}{{\bf J}}
\begin{document}

\title{Type-independent Characterization of Spacelike Separated Resources}

\author{Denis Rosset}
\affiliation{Perimeter Institute for Theoretical Physics, 31 Caroline Street North, Waterloo, Ontario Canada N2L 2Y5}

\author{David Schmid}
\affiliation{Perimeter Institute for Theoretical Physics, 31 Caroline Street North, Waterloo, Ontario Canada N2L 2Y5}
\affiliation{Institute for Quantum Computing and Department of Physics and Astronomy, University of Waterloo, Waterloo, Ontario N2L3G1, Canada}

\author{Francesco Buscemi}
\affiliation{Graduate School of Informatics, Nagoya University, Chikusa-ku, 464-8601 Nagoya, Japan}

\begin{abstract}
  Quantum theory describes multipartite objects of various types: quantum states, nonlocal boxes, steering assemblages, teleportages, distributed measurements, channels, and so on.
  Such objects describe, for example, the resources shared in quantum networks.
  Not all such objects are useful, however. In the context of space-like separated parties, devices which can be simulated using local operations and shared randomness are useless, and it is of paramount importance to be able to practically distinguish useful from useless quantum resources.
  Accordingly, a body of literature has arisen to provide tools for witnessing and quantifying the nonclassicality of objects of each specific type.
  In the present work, we provide a framework which subsumes and generalizes all of these resources, as well as the tools for witnessing and quantifying their nonclassicality.
\end{abstract}

\maketitle

Space-like separated resources of various types are studied in quantum information.
In the bipartite setting, at least ten types of objects have been considered as resources in computation or communication tasks: density matrices~{\cite{Horodecki2009}}, shared randomness~\cite{Fine1982}, nonlocal boxes~{\cite{Brunner2014}}, steering assemblages~{\cite{Cavalcanti2017}}, teleportages~{\cite{Cavalcanti2017a,Hoban2018}}, distributed POVMs (or `semiquantum' channels)~{\cite{Buscemi2012}}, device-independent steering channels~{\cite{Cavalcanti2013a}}, channel assemblages~{\cite{Piani2015}}, Bob-with-input steering assemblages~{\cite{Sainz2019}}, and bipartite quantum channels~\cite{watrous_2018}.
Heterogenous objects appear in quantum networks~{\cite{Cavalcanti2015}}; for example, depending on the local operations available,
various schemes of quantum key distribution have been proposed~\cite{wei2019}.
With the partial exception of Ref.~{\cite{Hoban2018}}, no unified framework has been given to describe and characterize\footnote{Ref.~\cite{Hoban2018} also recognized the value of treating various types of resources as channels, but did not study the full range of possibilities, as we have done here. The chief differences between our work and Ref.~6 are that our work focuses on nonclassicality (while Ref.~\cite{Hoban2018} focuses on postquantumness), and that we leverage the resource theoretic framework to unify the distinct types of nonclassicality, as discussed in the main text. }
 all these different multipartite space-like separated resource types.

Here, we work in the context of space-like separation, where local operations and shared randomness (LOSR operations) are free, but no-signaling forbids classical communication, and we provide a framework which unifies the study of nonclassicality of arbitrary types of resources. We introduce a common notation for all resource types, distinguished by the nature (trivial, classical, or quantum) of the input and output systems, and define a unified notion of nonclassicality which subsumes the natural notions of nonclassicality for every type of resource.

\begin{figure}[t]
  \includegraphics[scale=1]{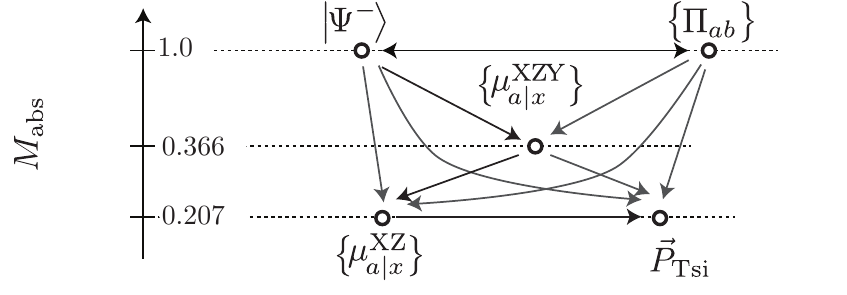}
  \begin{tabular}{llcc}
   Notation \ & Resource $R$  & \ Type $\mathsf{T}[R]$ & \ \ $M_{\text{abs}}(R)$  \\
    \hline
 $|\Psi^{-}\rangle$  & Singlet & $\mathsf{I} \mathsf{I} \rightarrow \mathsf{Q}
    \mathsf{Q}$ & 1\\
 $\left\{ \Pi_{ab} \right\}$  & BM$^2$ distributed POVM & $\mathsf{QQ} \rightarrow \mathsf{CC}$ & 1\\
 $\left\{ \mu^\text{XZY}_{a|x} \right\}$  & XZY-singlet assemblage & $\mathsf{CI} \rightarrow \mathsf{CQ}$ & $ \approx 0.366$ \\
 $\left\{ \mu^\text{XZ}_{a|x}\right\}$  & XZ-singlet assemblage & $\mathsf{CI} \rightarrow \mathsf{CQ}$ & $\approx 0.207$ \\
 $\vec{P}_{\rm Tsi}$  & Tsirelson box & $\mathsf{CC} \rightarrow \mathsf{CC}$ & $\approx 0.207$
  \end{tabular}
  \caption{\label{Fig:AbsRobPreorder}
    We exhaustively characterize the nonclassicality of five resources of four different types by their convertibility relations under LOSR operations (figure).
    We also characterize their nonclassicality using a {\em type-independent} absolute robustness monotone $M_\text{abs}$ (table) that we introduce below. Any single monotone is only partially informative; e.g., $M_\text{abs}$ assigns the same value to $\{ \mu^\text{XZ}_{a|x}\}$ and $\vec{P}_{\rm Tsi}$, even though the former is {\em strictly} more nonclassical than the latter.
}
\end{figure}

That is, we here (together with Ref.~\cite{Schmid2019}) demonstrate that entangled states, nonlocal boxes, unsteerable assemblages, nonclassical teleportages, and indeed resources of {\em all} the types just listed can be viewed as instances of a single notion of resourcefulness within a unified resource theory~\cite{Coecke2016}: that of {\em nonclassicality of common cause processes}.
This phrase was first introduced in Ref.~\cite{wolfe2019quantifying} in the context of a resource theory of nonlocal boxes, and it is also apt in the broader type-independent context considered here. This unified view based on LOSR operations also resolves some long-standing confusions~\cite{LOSRvsLOCCentang}.


While~\cite{Schmid2019} defines and studies this resource theory more abstractly, this paper proposes tools to quantify the nonclassicality of resources in a type-independent manner, and defines nonclassicality measures that transcend types.
For example, we will here quantitatively compare resources of various different types that have previously been studied only separately in the literature.
We also discuss how these measures can be computed or approximated using off-the-shelf software.
We base our discussion on the resources presented in Fig.~\ref{Fig:AbsRobPreorder}, whose features are elaborated on below.
In other words, we show that one can rigorously compare the quantitative degree of nonclassicality inherent in distributed resources of arbitrary types.

\paragraph*{A unified notation for all types of resources ---}
First, we introduce a notation which is capable of describing a wide variety of resources that arise naturally in Bell scenarios.
In our framework, a {\em resource} is a quantum device distributed over multiple space-like separated parties which receives inputs and produces outputs.
Viewed broadly as a channel, this subsumes a wide variety of special cases in the literature, as we will show.
For simplicity, we restrict ourselves to the two-party case and to finite-dimensional Hilbert spaces: we denote by $\Xcal$, $\Ycal$ the input systems and by $\Acal, \Bcal$ the output systems of the first (Alice) and second (Bob) parties respectively, as shown in Fig.~\ref{Fig:Resource}.
We also use $\Hcal$ as an auxiliary system or placeholder.

To each system, say $\Hcal$, we associate two values: the {\em dimension} $\dop{\Hcal}$ and the {\em type} $\Top{\Hcal}$ so that $\Hcal = (\dop{\Hcal}, \Top{\Hcal})$.
The dimension $\dop{\Hcal}$ describes the associated Hilbert space $\Cbb^{\dop{\Hcal}}$, equipped with the computational basis $\{\ket{i_\Hcal}\}_{i = 1}^{\dop{\Hcal}}$.
We write the set of Hermitian operators on $\Cbb^{\dop{\Hcal}}$ as $\Hop{\Hcal}$, the set of positive semidefinite operators as $\Hplusop{\Hcal} = \left\{ \rho \in \mathsf{H} (\mathcal{H}) : \rho \geqslant 0 \right\}$, and the set of density matrices as $\Dop{\Hcal} = \{\rho \in \mathsf{H}_+ (\mathcal{H}) : \tr(\rho) = 1 \}$.
The type $\Top{\Hcal} \in \{\Isf, \Csf, \Qsf\}$ describes whether the system is trivial, classical or quantum.
A system is {\em trivial} ($\Top{\Hcal} = \Isf$) if and only if $\dop{\Hcal} = 1$.
A {\em classical} system ($\Top{\Hcal} = \Csf$) restricts the operators in $\Hop{\Hcal}$ to a commutative subalgebra; without loss of generality, we take these operators to be diagonal in the computational basis $\ket{i_\Hcal}$.
Finally, a {\em quantum} system ($\Top{\Hcal} = \Qsf$) has no such restriction.
We sometimes omit the tensor product symbol: e.g., $\mathsf{D} (\mathcal{X}\mathcal{Y})$ denotes $\mathsf{D} (\mathcal{X} \otimes \mathcal{Y})$, and we omit subscripts labeling systems when convenient and unambiguous.

A resource $R$ is a map
\begin{equation}
  \label{Eq:TwoPartyResource}
  R_{\Acal\Bcal|\Xcal\Ycal} :
  \mathsf{D} (\mathcal{X} \mathcal{Y}) \rightarrow \mathsf{D} (\mathcal{A} \mathcal{B})\;,
\end{equation}
which is trace-preserving
\begin{equation}
  \label{Eq:TP}
 \tr(R [\xi \otimes \psi]) = 1 \quad \forall \xi \in \mathsf{D} (\mathcal{X}), \psi \in \mathsf{D} (\mathcal{Y})
\end{equation}
and completely positive~\cite{NielsenAndChuang,Schmidcausal}
\begin{equation}
  \label{Eq:CP}
(R\otimes \mathbb{I}) [\nu_{\mathcal{XYH'}}] \geqslant 0 \quad \forall \nu_{\mathcal{XYH'}} \in \mathsf{D} (\mathcal{XYH'})\;,
\end{equation}
for any auxilliary space $\Hcal'$.
The subscript $\Acal\Bcal|\Xcal\Ycal$ corresponds to the types and dimensions of all input and output systems.
For a resource $R$ as in~\eqref{Eq:TwoPartyResource}, we define the {\em resource type} written
\begin{equation}
  \label{Eq:ResourceType}
  \Top{R} \quad = \quad \mathsf{T} [\mathcal{X}] \mathsf{T} [\mathcal{Y}] \rightarrow \mathsf{T} [\mathcal{A}] \mathsf{T} [\mathcal{B}]\;,
\end{equation}
and the {\em resource dimension}
\begin{equation}
  \label{Eq:ResourceDimension}
  \dop{R} = (\dop{\Acal}, \dop{\Bcal}, \dop{\Xcal}, \dop{\Ycal})\;.
\end{equation}

We consider only resources which are {\em nonsignaling} from every party to every other.
No-signaling from Alice to Bob implies that if one ignores Alice's output $\mathcal{A}$, then Alice's input $\mathcal{X}$ has no influence on Bob's output $\mathcal{B}$.
That is, the reduced channel $R_{\mathcal{B}|\mathcal{X}\mathcal{Y}} \equiv \tr_{\mathcal{A}} R_{\mathcal{A}\mathcal{B}|\mathcal{X}\mathcal{Y}}$ of such a resource $R$ satisfies $R_{\mathcal{B}|\mathcal{X}\mathcal{Y}} [\xi \otimes  \psi] = R_{\mathcal{B}|\mathcal{X}\mathcal{Y}} [\xi' \otimes \psi]$ for all inputs $\xi, \xi' \in \mathsf{D} (\mathcal{X})$ and $\psi \in \mathsf{D} (\mathcal{Y})$.
Hence, one can pick an arbitrary $\xi$ to construct the unique $R_{\mathcal{B} | \mathcal{Y}}$ such that
\begin{equation}
  \label{Eq:RNonsignalingBXY} R_{\mathcal{B}|\mathcal{X}\mathcal{Y}} [\xi
  \otimes \psi] = \operatorname{tr} [\xi] R_{\mathcal{B} | \mathcal{Y}} [\psi]
\end{equation}
for all $\xi$ and $\psi$.
No-signaling from Bob to Alice can be encoded similarly as
\begin{equation}
  \label{Eq:RNonsignalingAXY} R_{\mathcal{A}|\mathcal{X}\mathcal{Y}} [\xi
  \otimes \psi] = \operatorname{tr} [\psi] R_{\mathcal{A} | \mathcal{X}} [\xi].
\end{equation}

\begin{figure}
  \includegraphics{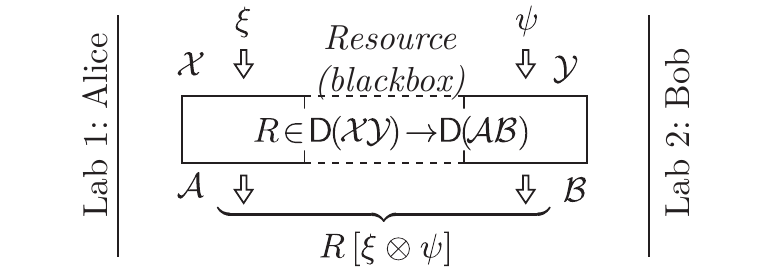}
  \caption{
    \label{Fig:Resource}
    A two-party resource $R \in \mathsf{D}(\mathcal{XY})\rightarrow\mathsf{D}(\mathcal{AB})$ and its action on product input $\xi \otimes \psi$; by allowing $\mathcal{X}, \mathcal{Y}, \mathcal{A}, \mathcal{B}$ to be trivial, classical, or quantum, we unify all the resource types shown in Table~\ref{Tab:ResourceKinds}.
    }
\end{figure}

\paragraph*{Input/output types ---}
Note that when an output $\mathcal{A}$ of a party is trivial, with $\mathsf{T} [\mathcal{A}] = \mathsf{I}$ (or equivalently $\mathsf{d} [\mathcal{A}] = \mathsf{I}$), no-signaling guarantees that the party is irrelevant and can always be ignored.
In contrast, when an input $\mathcal{X}$ of a party is trivial, with $\mathsf{T} [\mathcal{X}] = \mathsf{I}$, that party can nonetheless be nontrivial, as happens (e.g.) with quantum states.
When an output $\mathcal{A}$ is classical, with $\mathsf{T}[\mathcal{A}] = \mathsf{C}$, the channel satisfies (for all $\xi$ and all $\psi$):
  \begin{equation}
    \label{Eq:ClassicalOutput} \forall i_{\mathcal{A}} \neq j_{\mathcal{A}}, \qquad \langle i_{\mathcal{A}} | R [\xi \otimes \psi] | j_{\mathcal{A}} \rangle = 0.
  \end{equation}
  When an input $\mathcal{X}$ is classical, with $\mathsf{T}
  [\mathcal{X}] = \mathsf{C}$, the channel acts on the diagonal subspace, and
  thus satisfies (for all $\psi$)
  \begin{equation}
    \label{Eq:ClassicalInput} \forall i_{\mathcal{X}} \neq j_{\mathcal{X}}, \qquad R [| i_{\mathcal{X}} \rangle \langle j_{\mathcal{X}} | \otimes \psi] = 0.
  \end{equation}

\begin{table*}[t]
  \begin{tabular}{llclllc}
    Name & Type $\mathsf{T}[R]$ & Drawing &  & Name & Type $\mathsf{T}[R]$ & Drawing \\ \cline{1-3} \cline{5-7}
    Quantum state~\cite{Horodecki2009,LOSRvsLOCCentang} & $\mathsf{II} \rightarrow \mathsf{QQ}$ & \begin{minipage}{2cm} \includegraphics[page=1]{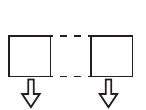} \end{minipage} &  &
    \parbox[c]{5cm}{\flushleft{Distributed measurement \\ (see semiquantum games~\cite{Buscemi2012})}} & $\mathsf{QQ} \rightarrow \mathsf{CC}$ & \begin{minipage}{2cm}\includegraphics[page=6]{boards.pdf}\end{minipage} \\
    Shared randomness~\cite{Fine1982} & $\mathsf{II} \rightarrow \mathsf{CC}$ & \begin{minipage}{2cm}\includegraphics[page=2]{boards.pdf}\end{minipage} &  &
    MDI steering assemblage~\cite{Cavalcanti2013a} & $\mathsf{CQ} \rightarrow \mathsf{CC}$ & \begin{minipage}{2cm}\includegraphics[page=7]{boards.pdf}\end{minipage} \\
    Nonlocal box~\cite{Brunner2014} & $\mathsf{CC} \rightarrow \mathsf{CC}$ & \begin{minipage}{2cm}\includegraphics[page=3]{boards.pdf}\end{minipage} &  &
    Channel assemblage~\cite{Piani2015} & $\mathsf{CQ} \rightarrow \mathsf{CQ}$ & \begin{minipage}{2cm}\includegraphics[page=8]{boards.pdf}\end{minipage} \\
    Steering assemblage~\cite{Cavalcanti2017} & $\mathsf{CI} \rightarrow \mathsf{CQ}$ & \begin{minipage}{2cm}\includegraphics[page=4]{boards.pdf}\end{minipage} &  &
    \parbox[c]{5cm}{\flushleft{Bob-with-input steering \\ assemblage~\cite{Sainz2019}}} & $\mathsf{CC} \rightarrow \mathsf{CQ}$ & \begin{minipage}{2cm}\includegraphics[page=9]{boards.pdf}\end{minipage} \\
    \parbox[c]{4cm}{\flushleft{Teleportage~\cite{Hoban2018} \\ (in teleportation exps.~\cite{Cavalcanti2017a})}} & $\mathsf{QI} \rightarrow \mathsf{CQ}$ & \begin{minipage}{2cm}\includegraphics[page=5]{boards.pdf}\end{minipage} &  &
    General bipartite channel~\cite{watrous_2018} & $\mathsf{QQ} \rightarrow \mathsf{QQ}$ & \begin{minipage}{2cm}\includegraphics[page=10]{boards.pdf}\end{minipage}
  \end{tabular}
  \caption[Types of resources studied in the literature]{\label{Tab:ResourceKinds}
    Types of resources studied in the literature.
    The arrow~\includegraphics{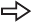} indicates a quantum input/output space, while the arrow~\includegraphics{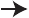} indicates a classical input/output space.
  }
\end{table*}

\begin{definition}
  \label{Def:NSResources}
The set ${\bf R}_{\mathcal{A}\mathcal{B}|\mathcal{X}\mathcal{Y}}$ of all {\em nonsignaling resources} of given types $\Top{R}$ and dimensions $\dop{R}$ is defined as those channels which satisfy
Eqs.~(\ref{Eq:TP})-(\ref{Eq:RNonsignalingAXY})
and (when applicable) (\ref{Eq:ClassicalOutput}),(\ref{Eq:ClassicalInput}).
\end{definition}
For any given type and dimensionalities, this set is representable via a semidefinite program (SDP), using the Choi matrix~\cite{Choi1975,Jamiolkowski1972} representation of resources; see Sec.~\ref{App:NSRepresentable} for an explicit proof.

We show ten distinct types of resources which our framework subsumes and which have been previously studied in the literature in Table~\ref{Tab:ResourceKinds}.

\paragraph*{Examples of resources ---}
We now define the examples from Fig.~\ref{Fig:AbsRobPreorder}.

The {\em singlet} $\ket{\Psi^-}$ is a quantum state (type $\mathsf{II} \rightarrow \mathsf{QQ}$), written as a channel acting on a trivial input, denoted $1$:
\begin{equation}
  R_{\ket{\Psi^-}}[1] \! = \! \dyad{\!\Psi^-\!} = \big(\ket{01}-\ket{10}\big)\big(\bra{01}-\bra{10}\big)/2\;.
\end{equation}

The {\em BM$^2$ distributed POVM} $\{ \Pi_{ab} \}$ (type $\mathsf{QQ} \rightarrow \mathsf{CC}$) is inspired by semiquantum tests of entanglement~\cite{Verbanis2016,Rosset2018a,Schmid2019}.
It can be constructed by two parties who share a singlet state $\ket{\Psi^-}$; Alice jointly measures system $\mathcal{X}$ together with her half of the singlet using a Bell measurement (BM) in the basis $\{ (\sigma_a \otimes \one) \ket{\Psi^-} \}_{a=0,1,2,3}$, where $a$ labels her outcome, and Bob proceeds similarly on system $\mathcal{Y}$ together with his half of the singlet, with $b=0,1,2,3$ as his outcome. (Here, $\sigma_0 = \one$ and $\sigma_{1,2,3}$ are the Pauli matrices.)
The action of the resulting distributed POVM with quantum inputs $\mathcal{X}$ and $\mathcal{Y}$ on input quantum states $\xi$ and $\psi$ can be written
\begin{equation}
  R_{\{ \Pi_{ab} \}}[\xi \otimes \psi] \! = \! \sum_{ab}\! \dyad{ab} \expval{\left( \sigma_a \xi \sigma_a \right) \otimes \left(\sigma_b \psi \sigma_b \right)}{\!\Psi^-\!}.
\end{equation}

The resources $\{ \mu^\text{XZY}_{a|x} \}$ and $\{ \mu^\text{XZ}_{a|x} \}$ are steering assemblages (type $\mathsf{CI} \rightarrow \mathsf{CQ}$).
The {\em XZY-singlet assemblage} $\left\{ \mu^\text{XZY}_{a|x} \right\}$ can be constructed by two parties sharing a singlet, and one of them performing the measurement
$\{ \dyad{+}, \dyad{-} \}$, $\{\ket{0}\bra{0}, \ket{1}\bra{1} \}$ or $\{ \ket{+\!i}\bra{+\!i}, \ket{-\!i}\bra{-\!i} \}$, depending on whether the classical input $x$ is $0$, $1$, or $2$, respectively, and where $\ket{\pm} = \frac{1}{\sqrt{2}}(\ket{0} \pm \ket{1}$) and $\ket{\pm\!i} = \frac{1}{\sqrt{2}}( \ket{0} \pm i \ket{1}$).
One can write the resulting assemblage in terms of its action on the three possible classical inputs as
\begin{align}
  R_{\{\mu^\text{XZY}_{a|x}\}}[\dyad{0}] & = \big(\dyad{0-} + \dyad{1+}\big)/2,  \\
  R_{\{\mu^\text{XZY}_{a|x}\}}[\dyad{1}] & = \big(\dyad{01} + \dyad{10}\big)/2, \nonumber  \\
  R_{\{\mu^\text{XZY}_{a|x}\}}[\dyad{2}] & = \big(\dyad{0-\!i} + \dyad{1+\!i}\big)/2. \nonumber
\end{align}
The {\em XZ-singlet assemblage} $\{ \mu^\text{XZ}_{a|x} \}$ is defined similarly, but where the classical inputs are $x=0,1$, corresponding to the first two POVMs above, respectively.

The {\em Tsirelson box} $P_\text{Tsi}(ab|xy)$ (type $\mathsf{CC} \rightarrow \mathsf{CC}$) of Ref~\cite{Tsirelson1987} is the quantumly-realizable box which maximally violates the CHSH inequality; it is obtained from $\ket{\Psi^-}$ by projective measurements.
Alice uses the same measurements as in the preparation of $R_{\{\mu^\text{XZ}_{a|x}\}}$, while Bob measures either $\{ c \ket{0} + s \ket{1}, s \ket{0} - c \ket{1} \}$ or $\{s \ket{0} + c \ket{1}, c \ket{0} - s \ket{1} \}$, depending on whether $y=0$ or $y=1$, respectively, and where $c = \cos \pi/8$ and $s = \sin \pi/8$.
Then with $a,b,x,y = 0,1$, we have $R_\text{Tsi}[\dyad{xy}] = \sum_{ab} \dyad{ab} P_\text{Psi}(ab|xy)$ with
\begin{equation} \label{Eq:PRdefn}
  P_\text{Tsi}(ab|xy) = \frac{1 + (-1)^{a + b + xy}/\sqrt{2}}{4}.
\end{equation}

\paragraph*{A unified resource theory ---}
Next, we introduce a single {\em resource theory}~\cite{Coecke2016}  which captures the relevant notion of nonclassicality for {\em all} resources of all of the types described above.
Within this resource theory (which is expanded upon in the companion article~\cite{Schmid2019}), free resources are the ones that are obtained using only {\em local operations and shared randomness} (LOSR operations~\cite{wolfe2019quantifying,LOSRvsLOCCentang}):
\begin{definition}
  \label{Def:FreeResources}
  A resource $R_{\mathcal{AB}|\mathcal{XY}}$ is {\em LOSR-free} if it admits of a convex decomposition into single party resources $R_{\mathcal{A}|\mathcal{X}}^i$ and
  $R_{\mathcal{B}|\mathcal{Y}}^i$ according to probability distribution $p_i$:
  \begin{equation}
    \label{Eq:RConvexDecomposition}
    R_{\mathcal{A}\mathcal{B}|\mathcal{X}\mathcal{Y}} = \sum_i p_i (R_{\mathcal{A}|\mathcal{X}}^i \otimes R_{\mathcal{B}|\mathcal{Y}}^i)\; .
  \end{equation}
\end{definition}

We denote the set of {\em all} free resources as $\mathbf{R}^{\rm free}:= \bigcup_{\mathcal{A}\mathcal{B}\mathcal{X}\mathcal{Y}} {\bf R}^{\rm free}_{\mathcal{A}\mathcal{B} | \mathcal{X}\mathcal{Y}}$, and the set of {\em all} no-signaling resources (free or nonfree) as $\mathbf{R} := \bigcup_{\mathcal{A}\mathcal{B}\mathcal{X}\mathcal{Y}} {\bf R}_{\mathcal{A}\mathcal{B} | \mathcal{X}\mathcal{Y}}$; note that the union runs over all types and dimensions.

For some types, we have ${\bf R}^{\rm free}_{\mathcal{A}\mathcal{B} | \mathcal{X}\mathcal{Y}} = {\bf R}_{\mathcal{A}\mathcal{B} | \mathcal{X}\mathcal{Y}}$.

\begin{definition}
  \label{Def:Useless}
  A resource type is {\em $\mathsf{T}$-trivial} if every resource of that type is necessarily free.
\end{definition}

\begin{proposition}
  \label{Prop:Useless}
  Any type $\mathsf{T} [\mathcal{X}] \mathsf{T} [\mathcal{Y}] \rightarrow \mathsf{T} [\mathcal{A}] \mathsf{T} [\mathcal{B}]$ with $\mathsf{T}[X] = \mathsf{I}$ and $\mathsf{T}[A] = \mathsf{C}$ is $\mathsf{T}$-trivial.
\end{proposition}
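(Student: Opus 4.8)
The plan is to exploit the two defining features of this type---triviality of Alice's input and classicality of Alice's output---to strip Alice's outcome off as shared randomness, leaving Bob with a conditional channel. Since $\Top{\Xcal} = \Isf$ forces $\dop{\Xcal} = 1$, the only state on $\Xcal$ is the trivial one $1$, so the resource is effectively the map $\psi \mapsto R[1 \otimes \psi]$ on $\Dop{\Ycal}$. Because $\Top{\Acal} = \Csf$, the classical-output constraint~\eqref{Eq:ClassicalOutput} guarantees that the output is block-diagonal on $\Acal$, so I would first write
\begin{equation}
  R[1 \otimes \psi] = \sum_a \dyad{a_\Acal} \otimes \sigma_a[\psi],
\end{equation}
where each $\sigma_a[\,\cdot\,] := \Phi_a \circ R[1 \otimes \,\cdot\,]$, with $\Phi_a : X \mapsto (\bra{a_\Acal}\otimes\one_\Bcal)\, X\, (\ket{a_\Acal}\otimes\one_\Bcal)$, is a completely positive map into $\Hplusop{\Bcal}$, being the composition of the completely positive $R$ (cf.~\eqref{Eq:CP}) with the completely positive map $\Phi_a$.

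The crucial step is to show that the weights $\tr \sigma_a[\psi]$ are independent of $\psi$, and this is exactly where no-signaling from Bob to Alice enters. Tracing out $\Bcal$ and invoking~\eqref{Eq:RNonsignalingAXY} with $\Top{\Xcal} = \Isf$ forces $\tr_\Bcal R[1 \otimes \psi] = \tr[\psi]\, R_\Acal$ for a fixed $R_\Acal \in \Dop{\Acal}$; since this reduced operator is itself diagonal on $\Acal$, I may write $R_\Acal = \sum_a p_a \dyad{a_\Acal}$, whence $\tr \sigma_a[\psi] = \tr[\psi]\, p_a$. I would then define single-party resources: on Alice's side, $R^a_{\Acal|\Xcal}[1] := \dyad{a_\Acal}$, which deterministically emits the classical label $a$ and is a legitimate resource of type $\Isf \to \Csf$; and on Bob's side, for each $a$ with $p_a > 0$, $R^a_{\Bcal|\Ycal}[\psi] := \sigma_a[\psi]/p_a$, which is completely positive and, by the weight computation just given, trace-preserving.

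Substituting these back reproduces precisely the convex decomposition of Definition~\ref{Def:FreeResources},
\begin{equation}
  R = \sum_{a : p_a > 0} p_a \left( R^a_{\Acal|\Xcal} \otimes R^a_{\Bcal|\Ycal} \right),
\end{equation}
so that $R$ is LOSR-free. The remaining bookkeeping is to confirm that $R^a_{\Bcal|\Ycal}$ inherits whatever type constraints apply to $\Ycal$ and $\Bcal$: if $\Top{\Bcal} = \Csf$ then each $\sigma_a[\psi]$ is already diagonal on $\Bcal$, and if $\Top{\Ycal} = \Csf$ then $R$ already acts only on the diagonal subspace, so the restrictions~\eqref{Eq:ClassicalOutput}--\eqref{Eq:ClassicalInput} descend to $R^a_{\Bcal|\Ycal}$. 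I expect no genuine difficulty here; the only point requiring care is the case $p_a = 0$, where positivity together with a vanishing trace forces $\sigma_a \equiv 0$, so those terms drop harmlessly from the sum. Thus the single substantive idea is the no-signaling-driven independence of the weights $p_a$ from Bob's input, and everything else is routine verification.
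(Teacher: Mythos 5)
Your proposal is correct and follows essentially the same route as the paper's own proof: use classicality of $\Acal$ to decompose $R[1\otimes\psi]=\sum_a \dyad{a}\otimes\sigma_a[\psi]$, invoke no-signaling from Bob to Alice to show the weights $p_a=\tr\sigma_a[\psi]$ are independent of $\psi$, and then normalize to obtain the convex decomposition of Definition~\ref{Def:FreeResources}. The extra bookkeeping you supply (complete positivity of $\Phi_a\circ R$, the $p_a=0$ case, and type inheritance on Bob's side) is correct and simply makes explicit what the paper leaves implicit.
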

\begin{proof}
  Let $R$ have type $\mathsf{I} \mathsf{T}[\mathcal{Y}] \to \mathsf{C} \mathsf{T}[\mathcal{B}]$.
  Let $P(a) = \expval{\tr_{\mathcal{B}} R[\psi]}{a}$ for an arbitrary $\psi \in \mathsf{D}(\mathcal{Y})$. By no-signaling, $P(a)$ is independent of $\psi$ and unique. For $a$ with $P(a) > 0$, define the single-party channel $R_{a}[\_] = \expval{R[\_]}{a}/P(a)\in \mathsf{D}(\mathcal{Y})\! \to \!\mathsf{D}(\mathcal{B})$.
Noting that $R[\psi] = \sum_a P(a) \dyad{a} \!\otimes\! R_a[\psi]$, it follows that $R$ is LOSR-free.
\end{proof}

We now turn our attention towards transformations of resources: a generic transformation $\tau$ on resources is a completely positive, linear supermap~\cite{Chiribella2008}
\begin{equation}
  \label{Eq:Transformation}
  \tau: \left[ \mathsf{D} (\mathcal{X} \mathcal{Y}) \rightarrow \mathsf{D} (\mathcal{A} \mathcal{B}) \right ] \rightarrow
  \left [ ( \mathsf{D} (\mathcal{X}' \mathcal{Y}') \rightarrow \mathsf{D} (\mathcal{A}' \mathcal{B}') \right ]
  \;
\end{equation}
that transforms a resource $R_{\mathcal{A}\mathcal{B}|\mathcal{X}\mathcal{Y}}$ into a resource $R_{\mathcal{A}'\mathcal{B}'|\mathcal{X}'\mathcal{Y}'}$, possibly changing its type and dimensions.
A transformation is {\em LOSR-free} if it is obtainable by local operations and shared randomness, and hence admits of a convex decomposition into products of arbitrary supermaps~\cite{Chiribella2008} acting only on a single party. A generic (bipartite) free transformation is shown in Fig.~3(a) of Ref.~\cite{Schmid2019}.
The set of free transformations is closed under composition, and maps free resources to free resources.

The resourcefulness of resources is completely characterized by the conversions that are possible between them using free operations.
For example, the relative nonclassicality of the resources in Fig.~\ref{Fig:AbsRobPreorder} are fully characterized by the conversion relations claimed therein, whose validity we now prove.
When we defined the examples above, we described how the singlet $\ket{\Psi^-}$ can be transformed into the resources $\{ \Pi_{ab} \}$, $\{\mu^\text{XZY}_{a|x}\}$, $\{\mu^\text{XZ}_{a|x}\}$, and $\vec{P}_{\rm Tsi}$ using local (and hence LOSR-free) transformations.
Furthermore, $\{ \Pi_{ab} \}$ can be freely transformed into the singlet $\ket{\Psi^-}$, as proved in~\cite{Schmid2019}, so the two are equally resourceful (and hence can be freely transformed into exactly the same resources).
The assemblage $\{\mu^\text{XZY}_{a|x}\}$ can be freely transformed into assemblage $\{\mu^\text{XZ}_{a|x}\}$ simply by Alice ignoring the third input value, and then also into $P_\text{Tsi}(ab|xy)$ by Bob further performing the measurements given above Eq.~\eqref{Eq:PRdefn} on his quantum output.
Finally, $\{\mu^\text{XZ}_{a|x}\}$ can be transformed into $P_\text{Tsi}(ab|xy)$ by Bob performing these same measurements.

It remains to show that every arrow absent from the figure corresponds to a conversion that is impossible under LOSR-free transformations.
The following proposition implies the nonconvertibility of $P_\text{Tsi}$ into $\{\mu^\text{XZY}_{a|x}\}$, $\{\mu^\text{XZ}_{a|x}\}$, or $\ket{\Psi^-}$, as well as the nonconvertibility of either $\{\mu^\text{XZY}_{a|x}\}$ or $\{\mu^\text{XZ}_{a|x}\}$ into $\ket{\Psi^-}$. By transitivity, these imply the nonconvertibility of any of $P_\text{Tsi}$, $\{\mu^\text{XZY}_{a|x}\}$, and $\{\mu^\text{XZ}_{a|x}\}$ into $\{ \Pi_{ab} \}$.
\begin{figure}[t]
  \includegraphics{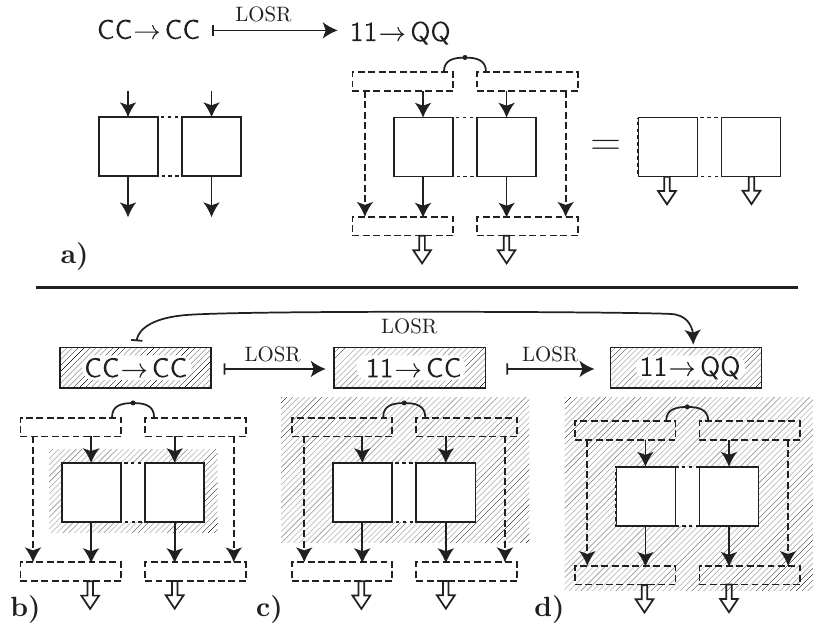}
  \caption{\label{proof}
Graphical proof of case 1. of Prop.~\ref{Prop:NonclassicalityDegradingConversions}.
    (a) The most general transformation from type $ \mathsf{C}\mathsf{C} \rightarrow \mathsf{C}\mathsf{C}$ to type $ \mathsf{I}\mathsf{I} \rightarrow  \mathsf{Q}\mathsf{Q}$; without loss of generality, the side channels can be taken to be classical.
One can imagine the transformation in (a) as a two-step procedure: the initial resource (b) of type $\mathsf{C}\mathsf{C} \rightarrow \mathsf{C}\mathsf{C}$ is transformed to a resource (c) of type $ \mathsf{I}\mathsf{I} \rightarrow  \mathsf{C}\mathsf{C}$, and then to a resource (d) of type $ \mathsf{I}\mathsf{I} \rightarrow  \mathsf{Q}\mathsf{Q}$.
In (b)-(d), the shading indicates the portion of the figure whose type is labeled.}
\end{figure}
\begin{proposition}
  \label{Prop:NonclassicalityDegradingConversions}
 For any transformation taking
  \begin{enumerate}
  \item boxes $(\mathsf{CC}\to\mathsf{CC})$ to states $(\mathsf{II}\to\mathsf{QQ})$,
  \item boxes $(\mathsf{CC}\to\mathsf{CC})$ to assemblages $(\mathsf{CI}\to\mathsf{CQ})$, or
  \item assemblages $(\mathsf{CI}\to\mathsf{CQ})$ to states $(\mathsf{II}\to\mathsf{QQ})$,
  \end{enumerate}
  the resulting resource is necessarily LOSR-free.
\end{proposition}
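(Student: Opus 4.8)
The plan is to write the most general LOSR transformation in each case explicitly and verify that the output satisfies the free decomposition of Def.~\ref{Def:FreeResources}. Such a transformation consists of shared classical randomness $\lambda\sim p(\lambda)$ together with a local supermap on each wing; crucially, the parties share \emph{no} entanglement, only $\lambda$. The first step, depicted in Fig.~\ref{proof}, is to argue that the local side channels may be taken classical without loss of generality. In all three cases each party's interface with the \emph{input} resource is classical on the side that could carry cross-party correlation---the box $(\mathsf{CC}\to\mathsf{CC})$ has classical outputs, and the assemblage $(\mathsf{CI}\to\mathsf{CQ})$ has a classical output on Alice---so any local quantum ancilla used to generate a classical input to the shared resource can be replaced by classical processing of $\lambda$. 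Each party's strategy then reduces to: generate the classical input it feeds to the shared resource (from $\lambda$ and its own classical input, if any), read off the classical data returned, and only afterwards prepare its output.

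Cases~1 and~3, whose targets are \emph{states} of type $\mathsf{II}\to\mathsf{QQ}$, then follow from a single separability argument. In case~1 Alice feeds the box a classical input $x'$ drawn with probability $q_A(x'|\lambda)$, receives outcome $a'$, and prepares a state $\sigma_A(\lambda,x',a')$, and likewise for Bob; the box contributes the distribution $P(a'b'|x'y')$, so the output state is
\begin{equation}
\begin{split}
\rho = \sum_{\lambda,x',y'} & p(\lambda)\, q_A(x'|\lambda)\, q_B(y'|\lambda) \\
&\times \sum_{a',b'} P(a'b'|x'y')\, \sigma_A(\lambda,x',a') \otimes \sigma_B(\lambda,y',b'),
\end{split}
\end{equation}
which is a convex mixture of product states and hence separable, i.e.\ free. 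Case~3 is identical in spirit: Alice reads the classical output $a$ of the assemblage and prepares $\sigma_A(\lambda,x,a)$ while Bob applies a local channel to the steered quantum system, so conditioning on the classical triple $(\lambda,x,a)$ again exhibits $\rho$ as a mixture of products. In both cases the classical output of the input resource acts as a local variable that precludes entanglement.

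The main obstacle is case~2, where the target is an \emph{assemblage} with a genuinely quantum Bob output, so one cannot read off a classical label on Bob's side and invoke separability; here no-signaling of the box is essential. The output assemblage is
\begin{equation}
\begin{split}
\mu_{a|x} = \sum_{\lambda,x',y'} & p(\lambda)\, q_A(x'|\lambda,x)\, q_B(y'|\lambda) \\
&\times \sum_{a',b'} P(a'b'|x'y')\, r_A(a|\lambda,x,x',a')\, \sigma_B(\lambda,y',b'),
\end{split}
\end{equation}
where $r_A$ is Alice's classical response. I would take the local hidden variable to be $\mu=(\lambda,y',b')$. Its weight $p(\lambda)\,q_B(y'|\lambda)\sum_{x'}q_A(x'|\lambda,x)\,P_B(b'|x'y')$ must be shown independent of Alice's input $x$; this is exactly where no-signaling from Alice to Bob enters, forcing the box marginal $P_B(b'|x'y')=\sum_{a'}P(a'b'|x'y')$ to equal $P_B(b'|y')$, whereupon the sum over $x'$ collapses to unity. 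Conditioned on $\mu$, Bob's state $\sigma_B(\lambda,y',b')$ is fixed and Alice's residual response is a normalized conditional distribution $D_A(a|x,\mu)$, yielding the local-hidden-state form $\mu_{a|x}=\sum_\mu p(\mu)\,D_A(a|x,\mu)\,\sigma_B(\mu)$ that is free by Def.~\ref{Def:FreeResources}. Verifying the normalization $\sum_a D_A=1$ uses no-signaling a second time.

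Finally, I expect cases~1 and~3 admit a slicker phrasing via the two-step factorization of Fig.~\ref{proof}: stopping the transformation before the final local preparation produces an intermediate resource of type $\mathsf{II}\to\mathsf{CC}$ (case~1) or $\mathsf{II}\to\mathsf{CQ}$ (case~3), each of which is $\mathsf{T}$-trivial by Prop.~\ref{Prop:Useless} and hence free, after which the remaining local preparation is a free transformation and preserves freeness. This shortcut fails precisely for case~2, whose natural intermediate type $\mathsf{CI}\to\mathsf{CC}$ is \emph{not} covered by Prop.~\ref{Prop:Useless} because its Alice input is classical rather than trivial; proving that such resources are nonetheless free is the no-signaling argument above, which I therefore expect to be the real content of the proposition.
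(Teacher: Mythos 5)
Your proof is correct: the explicit parametrizations are the most general ones (given that the side channels may be taken classical, which you justify), the separability argument disposes of cases~1 and~3, and the no-signaling computation in case~2 correctly establishes the local-hidden-state form, including the normalization of $D_A$. However, your route differs from the paper's, and your closing diagnosis of case~2 is mistaken. The paper proves all three cases uniformly by the two-step factorization you sketch at the end, including case~2 with intermediate type $\mathsf{CI}\to\mathsf{CC}$. That type \emph{is} covered by Prop.~\ref{Prop:Useless}: although the proposition is stated with Alice's input trivial and Alice's output classical, its proof is symmetric under relabeling the parties, and $\mathsf{CI}\to\mathsf{CC}$ has \emph{Bob's} input trivial and \emph{Bob's} output classical (the intermediate resource, obtained by stopping before Bob's final state preparation, has no Bob input and outputs only the classical data $(\lambda,y',b')$ on Bob's side). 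Your explicit case-2 argument --- no-signaling forces the weight $p(\lambda)\,q_B(y'|\lambda)\,P_B(b'|y')$ to be independent of $x$, after which conditioning on $(\lambda,y',b')$ yields a product decomposition --- is precisely a re-derivation of this party-swapped Prop.~\ref{Prop:Useless}, inlined into the larger computation. So the ``real content'' of the proposition is not a separate no-signaling argument peculiar to case~2; it is that each of the three conversions factors through a $\mathsf{T}$-trivial type, after which closure of the free set under LOSR finishes the job. What your version buys is self-containedness and transparency about exactly where no-signaling enters; what the paper's version buys is brevity and the structural observation that the proposition is a corollary of Prop.~\ref{Prop:Useless}.
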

\begin{proof}
  Consider first a generic LOSR transformation taking boxes $(\mathsf{CC}\to\mathsf{CC})$ to states $(\mathsf{II}\to\mathsf{QQ})$; the most general such transformation is depicted by the dashed operations in Fig.~\ref{proof}a), since for any party with only a classical input and output, the side channels on their local supermaps may be taken to be classical systems without loss of generality. As shown in Fig.~\ref{proof}b)-Fig.~\ref{proof}d), any such LOSR transformation can be seen as a composition of two LOSR transformations, where the first takes type $\mathsf{C}\mathsf{C} \rightarrow \mathsf{C}\mathsf{C}$ to type $\mathsf{I}\mathsf{I} \rightarrow \mathsf{C}\mathsf{C}$, and the second takes type $\mathsf{I}\mathsf{I} \rightarrow \mathsf{C}\mathsf{C}$ to type $\mathsf{I}\mathsf{I} \rightarrow \mathsf{Q}\mathsf{Q}$.
Now, Prop.~\ref{Prop:Useless}
guarantees that the result of the first transformation is free, since its type is $\mathsf{T}$-trivial.
  Since LOSR operations preserve the free set, the final resource resulting after the second transformation is also free.

  The other two cases listed in the proposition have analogous proofs, but where the $\mathsf{T}$-trivial intermediate types are $\mathsf{CI}\to\mathsf{CC}$ and $\mathsf{II}\to\mathsf{CQ}$, respectively.
\end{proof}

To establish the nonconvertibility of $\{ \mu^\text{XZ}_{a|x} \}$ to $\{ \mu^\text{XZY}_{a|x} \}$, it suffices to exhibit a monotone whose value would increase under the conversion.
Such a monotone is the type-independent absolute robustness, which we introduce below and depict in Fig.~\ref{Fig:AbsRobPreorder}.

\paragraph*{Type-independent monotones ---} One can quantitatively measure the nonclassicality of a resource using any function that does not increase under LOSR transformations, which is a {\em monotone} of our resource theory.
We focus on the type-independent absolute robustness here, but consider others in Sec.~\ref{App:Monotones}.

\begin{definition}
  \label{Def:AbsoluteRobustness} The {\em type-independent absolute robustness} $M_{\rm abs} (R)$ of a resource $R \in {\bf R}$ of arbitrary type and dimension is:
  \begin{multline}
    M_{\rm abs} (R) = \min s \ \text{such that } \ (R + s S)/(1 + s) \in {\bf R}^{\rm free},\nonumber\\
                            s \geqslant 0, \quad S \in {\bf R}^{\rm free}, \quad \dop{S} = \dop{R}, \Top{S} = \Top{R}.
  \end{multline}
\end{definition}

Our innovation here is to consider functions which behave monotonically {\em even under operations that change the resource type}, which allows us to compare the nonclassicality of resources across different types, as in Figure~\ref{Fig:AbsRobPreorder}.
We prove in Sec.~\ref{App:Monotones} that the absolute robustness (as defined here) has this property--- despite the fact that the special cases of it which have been previously studied~\cite{Vidal1999,Geller2014,Cavalcanti2016a,Sainz2016,Vicente2014,wolfe2019quantifying} have been type specific, and despite the fact that the computation of $M_{\text{abs}} (R)$ involves a specification of the type and dimension of $R$.

The values of this monotone on the examples in Fig.~\ref{Fig:AbsRobPreorder} are exact, and the manner by which they are computed is described in Sec.~\ref{App:AbsoluteRobustnessComputation}.  (We also compute these values for a family of parameterized versions of four of these resources.)
Note that monotones can be used to prove that some conversions under LOSR are impossible, namely, those which would increase the value of $M_\text{abs}$. One {\em cannot} conclude anything about which conversions {\em are} possible. This can be seen in Fig.~\ref{Fig:AbsRobPreorder}, where $M_\text{abs}$ assigns the same value to the Tsirelson box and to the assemblage $\{ \mu^\text{XZ}_{a|x}\}$, despite the fact that the two are {\em not} equivalent, as the former cannot be freely converted to the latter by Prop.~\ref{Prop:NonclassicalityDegradingConversions}.

%


\paragraph*{A hierarchy to characterize nonclassicality ---}
We now describe how one can {\em in practice} determine whether or not a given resource $R^{\star}$ is LOSR-free.
This can be done using a hierarchy of SDPs which ultimately checks if $R^{\star}$ is a member of ${\bf R}^{\text{free}}$. The reasoning is as follows.

If $R^{\star}_{\Acal \Bcal | \Xcal \Ycal} \in {\bf R}^{\text{free}}$, then $R^{\star}$ has a convex decomposition as in~(\ref{Eq:RConvexDecomposition}).
By copying the shared randomness to more parties (who can then locally emulate any other party) it follows that $R^{\star}$ has an $n$-symmetric extension~\cite{Doherty2004,Berta2018} for any $n$, obtained by copying the second party $n$ times in the product.
That is, for any $n$ there exists
\begin{equation}
  \label{Eq:ExplicitSymmetricExtension}
  R^{(n)} = \sum_i p_i (R_{\mathcal{A}|\mathcal{X}}^i \otimes R_{\mathcal{B}_1 |\mathcal{Y}_1}^i \otimes \ldots \otimes R_{\mathcal{B}_n |\mathcal{Y}_n}^i)
\end{equation}
such that

\begin{enumerate}[1.]
  \item $R^{(n)}$ is no-signaling from any party to any other,
  \item $R^{(n)}$ is symmetric under all $n!$ permutations of the copies, and
  \item the reduced resource $\tr_{\mathcal{B}_2 \ldots \mathcal{B}_k} R^{(n)} = R^\star$.
\end{enumerate}
\noindent Whether or not an $n$-symmetric extension $R^{(n)}$ exists can be tested by an SDP. Hence:

\begin{proposition}
  \label{Prop:SymmetricExtensions}
  The set of classical resources ${\bf R}^{\text{free}}_{\mathcal{A}\mathcal{B} |  \mathcal{X}\mathcal{Y}}$ of any given type and dimensionalities has a sequence of outer approximations
  \begin{equation}
    \label{Eq:ResourceHierarchy}
    {\bf R}_{\mathcal{A}\mathcal{B} | \mathcal{X}\mathcal{Y}} \supseteq {\bf F}_{\mathcal{A}\mathcal{B} | \mathcal{X}\mathcal{Y}}^{(1)} \supseteq \ldots {\bf F}_{\mathcal{A}\mathcal{B} | \mathcal{X}\mathcal{Y}}^{(n)} \ldots \supseteq {\bf R}^{\text{free}}_{\mathcal{A}\mathcal{B}|\mathcal{X}\mathcal{Y}}
  \end{equation}
  where each ${\bf F}_{\mathcal{A}\mathcal{B} | \mathcal{X}\mathcal{Y}}^{(n)}$ is representable by a SDP, such that $\lim_{n \rightarrow \infty} {\bf F}_{\mathcal{A}\mathcal{B} | \mathcal{X}\mathcal{Y}}^{(n)} = {\bf R}_{\mathcal{A}\mathcal{B} |  \mathcal{X}\mathcal{Y}}^{\text{free}}$.
\end{proposition}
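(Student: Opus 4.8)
The plan is to take each ${\bf F}^{(n)}$ to be the set of resources that admit a finite symmetric extension, and to split the argument into (i) SDP representability, (ii) the two inclusions displayed in Eq.~\eqref{Eq:ResourceHierarchy}, and (iii) convergence. Concretely I would set
\[ {\bf F}^{(n)} := \{\, R^\star \in {\bf R} : \exists\, R^{(n)} \text{ obeying conditions 1.--3.} \,\}, \]
where $R^{(n)}$ ranges over resources on $\Acal\Xcal \otimes \bigotimes_{k=1}^{n} \Bcal_k\Ycal_k$. Passing to Choi matrices, each of conditions 1.--3.\ is a linear constraint: no-signaling from every party to every other generalizes the affine conditions \eqref{Eq:RNonsignalingBXY}--\eqref{Eq:RNonsignalingAXY}; permutation symmetry is invariance under the unitaries that permute the blocks $\Bcal_k\Ycal_k$; and the reduction $\tr_{\Bcal_2\ldots\Bcal_n} R^{(n)} = R^\star$ equates two partial traces. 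Adjoining complete positivity (one positive-semidefinite constraint) together with the trace-preserving and type constraints---already known to be SDP-representable for ${\bf R}$---membership in ${\bf F}^{(n)}$ becomes feasibility of a semidefinite program in the extension variable, and ${\bf F}^{(n)}$ is the projection of that spectrahedron onto the Choi matrix of $R^\star$, hence SDP-representable.

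The inclusions are then immediate. For the outer bound ${\bf R}^{\rm free}\subseteq{\bf F}^{(n)}$, any free $R^\star$ furnishes the explicit extension of Eq.~\eqref{Eq:ExplicitSymmetricExtension}, which is manifestly permutation-symmetric, no-signaling (a shared-randomness mixture of products of single-party channels), and reduces to $R^\star$; hence $R^\star\in{\bf F}^{(n)}$ for all $n$. For the nesting ${\bf F}^{(n+1)}\subseteq{\bf F}^{(n)}$, given an $(n+1)$-extension I would discard one Bob copy by tracing out its output $\Bcal_{n+1}$; no-signaling guarantees the result no longer depends on the discarded input $\Ycal_{n+1}$ and is a valid $n$-extension of $R^\star$. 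Together these establish the chain in Eq.~\eqref{Eq:ResourceHierarchy}.

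The substance of the proposition is convergence, i.e.\ $\bigcap_n {\bf F}^{(n)} \subseteq {\bf R}^{\rm free}$, which with the above yields $\lim_{n\to\infty} {\bf F}^{(n)} = {\bf R}^{\rm free}$. Suppose $R^\star$ admits an $n$-extension for every $n$. The set of $n$-extensions of $R^\star$ is nonempty and compact, and restriction carries $(n+1)$-extensions to $n$-extensions, so an inverse-limit (diagonal) argument produces a single infinite resource on $\Acal\Xcal\otimes\bigotimes_{k=1}^\infty \Bcal_k\Ycal_k$ that is no-signaling, exchangeable under permutations of the Bob copies, and reduces to $R^\star$. Regarding its normalized Choi matrix as an exchangeable state with $\Acal\Xcal$ playing the role of a reference system, I would invoke the infinite quantum de Finetti theorem~\cite{Doherty2004,Berta2018} to obtain an exact integral representation $\int d\mu(\lambda)\,\rho^\lambda_{\Acal\Xcal}\otimes(\sigma^\lambda_{\Bcal\Ycal})^{\otimes\infty}$; marginalizing onto a single Bob copy expresses the Choi matrix of $R^\star$ as $\int d\mu(\lambda)\,\rho^\lambda_{\Acal\Xcal}\otimes\sigma^\lambda_{\Bcal\Ycal}$, a convex combination of products.

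The step I expect to be the main obstacle is showing that the de Finetti components $\rho^\lambda_{\Acal\Xcal}$ and $\sigma^\lambda_{\Bcal\Ycal}$ are themselves legitimate single-party no-signaling resources of the prescribed type, so that the representation genuinely certifies $R^\star\in{\bf R}^{\rm free}$ rather than membership in some larger product set. For this I would use that trace preservation, no-signaling, and the type constraints \eqref{Eq:ClassicalOutput}--\eqref{Eq:ClassicalInput} are all linear and closed and are obeyed by the single-copy marginals of every no-signaling extension; since the components arise as limits of such marginals, closedness forces them into the compact set of valid single-party Choi matrices. Because ${\bf R}^{\rm free}$ is closed and the ambient space is finite-dimensional, the integral then collapses to a finite convex combination by a Carath\'eodory argument, so $R^\star\in{\bf R}^{\rm free}$. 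For fully classical types the same scheme runs with the classical de Finetti theorem, the type constraints merely restricting the components to diagonal Choi matrices.
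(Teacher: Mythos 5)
Your first half coincides with the paper's: ${\bf F}^{(n)}$ is defined as the set of resources admitting an $n$-symmetric extension, SDP representability follows because conditions 1.--3.\ become linear (plus one positivity) constraints on the Choi matrix of the extension with ${\bf F}^{(n)}$ the projection onto the single-copy block, and the two inclusions follow from Eq.~\eqref{Eq:ExplicitSymmetricExtension} and from tracing out one copy. Where you genuinely diverge is the convergence step. The paper invokes the \emph{finite, quantitative} de Finetti theorem for channels (Theorem~3.4 of Ref.~\cite{Berta2018}), which directly yields a free resource within trace distance $c/\sqrt{n}$ of any $n$-extendible resource, with the no-signaling and classicality constraints built into the maps $\Lambda,\Gamma$ of that theorem so that the de Finetti components are automatically valid single-party channels of the correct type. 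You instead pass to an infinite exchangeable extension by a compactness/inverse-limit argument and apply the exact infinite quantum de Finetti theorem with $\Acal\Xcal$ as a reference system. Your route is more self-contained and avoids the quantitative machinery, but it gives no convergence rate (which the paper's appendix uses to conclude that $n$-extendibility for all $n$ implies freeness via $c/\sqrt{n}\to 0$), and it shifts all the difficulty onto exactly the step you flag: showing the components $\rho^\lambda_{\Acal\Xcal}$, $\sigma^\lambda_{\Bcal\Ycal}$ are legitimate single-party resources of the prescribed type.

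On that step your sketch is too thin to stand as written. Closedness of the single-party constraint set does not imply that a mixture lying in a linearly constrained set has $\mu$-almost all components in that set---a constraint such as $\tr_{\Bcal}\sigma^\lambda=\one_{\Ycal}/\dop{\Ycal}$ could in principle fail pointwise while holding on average. The correct argument uses that the no-signaling and classicality constraints hold on \emph{every} copy of the infinite extension: writing $\int d\mu(\lambda)\,(\sigma^\lambda)^{\otimes k}\otimes\bigl(\tr_{\Bcal}\sigma^\lambda-\one_{\Ycal}/\dop{\Ycal}\bigr)=0$ for all $k$ and appealing to the fact that the moments $\int(\sigma)^{\otimes k}d\mu$ determine $\mu$ (uniqueness of the de Finetti measure), one concludes the constraint holds $\mu$-a.e.; the classicality conditions \eqref{Eq:ClassicalOutput1}--\eqref{Eq:ClassicalInput1} are handled identically. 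This is precisely the content of the channel de Finetti theorem that the paper imports wholesale, so with that repair your argument closes and reproduces the proposition qualitatively; the Carath\'eodory reduction at the end is fine but not needed, since Definition~\ref{Def:FreeResources} is equivalent to membership in the closed convex hull in finite dimension.
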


\begin{proof}
  Each ${\bf F}_{\mathcal{A}\mathcal{B} | \mathcal{X}\mathcal{Y}}^{(n)}$ is defined by the set of resources that admit of an $n$-symmetric extension, and each such set is SDP-representable (see Def.~\ref{Def:NSResources}).
 Leveraging the Choi matrix representation of resources, convergence follows immediately from Ref.~{\cite[Theorem 3.4]{Berta2018}}, since the constraints of Eqs.~(\ref{Eq:TP})-(\ref{Eq:RNonsignalingAXY}) and (when applicable) (\ref{Eq:ClassicalOutput}),(\ref{Eq:ClassicalInput}) have the form required by the theorem; see Sec.~\ref{App:FreeRepresentable} for explicit details.
\end{proof}

\paragraph*{Application: witnessing nonclassicality ---}
Because this sequence of approximations converges on ${\bf R}_{\mathcal{A}\mathcal{B} | \mathcal{X}\mathcal{Y}}^{\text{free}}$, for any nonclassical resource $R \notin {\bf R}_{\mathcal{A}\mathcal{B} | \mathcal{X}\mathcal{Y}}^{\text{free}}$ of arbitrary type and dimension, there exists some level $n$ of the hierarchy such that $R \notin {\bf F}^{(n)}_{\mathcal{A}\mathcal{B} | \mathcal{X}\mathcal{Y}}$, which witnesses the fact that $R$ is not free.
We stress again that our technique applies to {\em all types of resources}, since all that distinguishes different types is the inclusion (or not) of constraints of the form in (\ref{Eq:ClassicalOutput})-(\ref{Eq:ClassicalInput}) in the relevant SDPs.

\paragraph*{Application: computation of monotones ---}
This hierarchy enables the practical computation of monotones.
For example, consider again the absolute robustness $M_{\text{abs}}$.
By replacing ${\bf R}_{\mathcal{A}\mathcal{B} | \mathcal{X}\mathcal{Y}}^{\text{free}}$ with the outer approximation ${\bf F}^{(n)}_{\mathcal{A}\mathcal{B} | \mathcal{X}\mathcal{Y}}$ in the definition of $M_{\text{abs}}$, one relaxes the constraints in the minimization, thus obtaining a {\em lower bound} on $M_\text{abs} (R)$.
This gives a sequence of approximations
\begin{equation}
  \tilde{M}_{\text{abs}}^{(1)}(R) \leqslant \ldots \leqslant \tilde{M}_{\text{abs}}^{(n)}(R)  \leqslant \ldots \leqslant M_{\text{abs}}(R)
\end{equation}
each of which is explicitly computable using a SDP (see Sec.~\ref{App:MonotoneRepresentable}), such that $\lim_{n \rightarrow \infty} \tilde{M}_{\text{abs}}^{(n)}(R) = M_{\text{abs}} (R)$.

\paragraph*{Application: LOSR convertibility of states---}
Since convertibility relations fundamentally determine the resourcefulness of resources, it is of central importance to be able to determine when one resource can be freely converted to another.
By definition, free transformations that convert states (type $\mathsf{II}\to\mathsf{QQ}$) into states are themselves free resources of type $\mathsf{QQ}\to\mathsf{QQ}$.
Given states $\rho\in\mathsf{D}(\mathcal{AB})$ and $\rho'\in\mathsf{D}(\mathcal{A'B'})$, one can test for the existence of an LOSR-free transformation (viewed as a resource $R$) such that $  \rho' = R[\rho] $.
Since this is a linear constraint, one can modify the hierarchy of Proposition~\ref{Prop:SymmetricExtensions} to include it, generating a new hierarchy which tests for the possibility of LOSR convertibility between $\rho$ and $\rho'$.
A natural extension of this hierarchy to allow generic local supermaps~\cite{Chiribella2008} would allow one to test for the possibility of LOSR conversions between arbitrary types of resources, but we leave this for future work.

\paragraph*{Outlook ---}

We have given a type independent description of nonclassical resources when local operations and shared randomness are free, generalizing a series of existing results, including those in Ref.~\cite{Horodecki2009,Bowles2015,Brunner2014,Cavalcanti2017,Cavalcanti2017a,Hoban2018,Buscemi2012,Cavalcanti2013a,Piani2015,Sainz2019}.
This work opens many new research avenues.
First, although generalizing our definitions and arguments to $n$-party scenarios is straightforward, multipartite nonclassicality in general scenarios is likely to have a rich structure, as happens for multipartite LOCC-entanglement~\cite{Bancal2011,Gallego2012,Bancal2013}.
Still more interesting would be to expand our framework to encompass supermaps; for example, this would allow for a description of filtering~\cite{Dur2000}, and for the construction of a hierarchy of SDPs that can witness the possibility or impossibility of any LOSR conversion between any two resources of any type.
Much work remains to be done in defining monotones which are computable and which have operational significance across multiple resource types. These questions can also be asked for the resource theory of postquantumness, wherein the free operations are local operations and shared randomness~\cite{postquantumschmid}. For all these, the methods of Ref.~\cite{Gonda2019}, which presents a unified framework to study monotones in resource theories, are likely to be useful.
Finally, we hope that our framework will be used to unify the expanding set of scenarios under study, and to generalize the tools developed to characterize them.

\section*{Acknowledgments}

We acknowledge useful discussions with S.-L. Chen, R.W. Spekkens, and E. Wolfe.
D.S. is supported by a Vanier Canada Graduate Scholarship.
F.B. is grateful for the hospitality of Perimeter Institute where part of this work was carried out and acknowledges partial support from the Japan Society for the Promotion of Science (JSPS) KAKENHI, Grant No.19H04066 and No.20K03746.
Research at Perimeter Institute is supported in part by the Government of Canada through the Department of Innovation, Science and Economic Development Canada and by the Province of Ontario through the Ministry of Economic Development, Job Creation and Trade.
This publication was made possible through the support of a grant from the John Templeton Foundation. The opinions expressed in this publication are those of the authors and do not necessarily reflect the views of the John Templeton Foundation.

\section*{Appendices}

\appendix

We order the next sections by increasing complexity, and do not follow necessarily the order of appearance of notions in the main text.

In Section~\ref{App:Monotones}, we prove that the type-independent absolute robustness $M_\text{abs}$ (Definition~\ref{Def:AbsoluteRobustness} of the main text) is a monotone of our type-independent resource theory of local operations and shared randomness.
We also explore other monotones (generalized robustness, nonlocal weight), and link their type-independent variants to their type-specific appearances in the literature.
We also show that random robustness cannot naturally be generalized to be type-independent.
In Section~\ref{App:ConvexOptimization} we review basic notions of convex optimization, and define LP- and SDP-representable sets, as these are used several times in the present work.
Then, in Section~\ref{App:NSRepresentable}, we show that the set of nonsignaling resources of any given type and dimensionalities is SDP-representable (sometimes even LP-representable).
On the other hand, the representability of the set of free resources is more complex: we show in Section~\ref{App:FreeRepresentable} that some types can be represented exactly using LPs or SDPs, while other types are represented by a convergent hierarchy of SDPs.
In Section~\ref{App:MonotoneRepresentable}, we show how to compute or approximate monotones using these LP- and SDP representations, and we describe in Section~\ref{App:AbsoluteRobustnessComputation} how we computed the absolute robustness of the examples presented in the main text, while extending the examples to partially entangled states.

\section{Monotones}
\label{App:Monotones}

When comparing resources of a fixed type, it suffices to use the traditional notion of a monotone as a function (from resources of a fixed type to the real numbers) which does not increase under {\em type-preserving} free operations.
\begin{definition}
  A {\em type-dependent LOSR monotone} $\mathcal{M}: \bf{R} \rightarrow \mathbbm{R}$ is a function that
  obeys
  \begin{equation}
    \mathcal{M} (\tau[R]) \leqslant \mathcal{M} (R)
  \end{equation}
  for all resources $R$ and LOSR-free transformations $\tau$ such that $R$ and $\tau[R]$ have fixed, constant type $\Top{R} = \Top{\tau[R]}$.
\end{definition}

For example, the negativity~\cite{Vidal2002} is a type-dependent LOCC (and thus LOSR) monotone that applies to quantum states ($\Top{R} = \Isf\Isf\to\Qsf\Qsf$); the relative entropy of nonlocality~\cite{Dam2005,Vicente2014} applies to nonlocal boxes ($\Top{R} = \Csf\Csf\to\Csf\Csf$) and so on.
Note that some of these monotones impose also that the dimension of the resource stays constant ($\dop{R} = \dop{\tau[R]}$), as in the example of the random robustness measure, discussed below.

However, the main message of this article is that one can and should compare the resourcefulness of resources across arbitrary types and dimensions.
Hence, one should consider dimension- and type-independent monotones, defined formally as follows.

\begin{definition}
  A {\em type-independent LOSR monotone} is a function $  M :
  \mathbf{R} \rightarrow \mathbbm{R}$ from the set $\mathbf{R}$ of all resources (of any type) to the real numbers, such that
  \begin{equation}
    M(\tau [R]) \leqslant M(R)
  \end{equation}
  for any free LOSR transformation $\tau$, including those which change the type and dimension of an input resource.
\end{definition}

We defined and focused on one such monotone, the type-independent absolute robustness, in the main text.
We now define several more type-independent monotones, as natural generalizations of type-dependent monotones that have been previously studied~\cite{Vidal1999, Geller2014, Cavalcanti2016a, Sainz2016, Vicente2014, wolfe2019quantifying, Steiner2003, Geller2014, Piani2015a, Cavalcanti2016a, Gallego2015}.

\subsection{Type-independent absolute robustness monotone}
\label{App:AbsRobApprox}

We have already defined the type-independent {\em absolute robustness} $M_{\text{abs}}$ in the main text in Definition~\ref{Def:AbsoluteRobustness}, as
\begin{multline}
  \label{Eq:AbsoluteRobustness}
  M_{\rm abs} (R) = \min s \ \text{such that } \ (R + s S)/(1 + s) \in {\bf R}^{\rm free},\\
  s \geqslant 0, \quad S \in {\bf R}^{\rm free}, \quad \dop{S} = \dop{R}, \Top{S} = \Top{R}\;.
\end{multline}

We now prove that this is indeed a type-independent monotone.

\begin{proof}
  For a given resource $R_{\Acal\Bcal|\Xcal\Ycal}$, the value $M_{\text{abs}} (R)$ is the value $s$ of a feasible solution ($s$, $S$) satisfying
  \begin{equation}
    \frac{ R+ s  S}{1+s} \in {\bf R}^{\rm free}_{\Acal\Bcal|\Xcal\Ycal}\;,
  \end{equation}
  for some $S \in {\bf R}^{\rm free}_{\Acal\Bcal|\Xcal\Ycal}$.
  Since any free LOSR transformation
  $\tau : \bf{R}_{\Acal\Bcal|\Xcal\Ycal} \rightarrow \bf{R}_{\Acal'\Bcal' | \Xcal'\Ycal'}$ (which may change the resource type and dimension) takes the free set ${\bf R}^{\rm free}_{\Acal\Bcal| \Xcal\Ycal}$ into the free set ${\bf R}^{\rm free}_{\Acal'\Bcal'| \Xcal'\Ycal'}$, it follows by linearity that
  \begin{equation}
    \label{intermedstep}
    \tau \left[ \frac{ R+ s  S}{1+s} \right ] = \frac{\tau [R] + s \tau [S]}{1+s} \in {\bf R}^{\rm free}_{\Acal'\Bcal'| \Xcal'\Ycal'} \;,
  \end{equation}
  and also that $\tau[S] \in {\bf R}^{\rm free}_{\Acal'\Bcal'| \Xcal'\Ycal'}$.
  Hence $(s,\tau[S])$ is a feasible solution in the optimization~\eqref{Eq:AbsoluteRobustness} for $\tau[R]$, and so the value $s$ is an upper bound on $M_{\text{abs}} (\tau [R])$, and so $M_{\text{abs}} (\tau [R]) \le M_{\text{abs}} (R)$. Since $\tau$ may be arbitrarily type-changing, $M_{\text{abs}}$ is a type-independent monotone.   \end{proof}

This is the natural type-independent generalization of the absolute robustness monotone defined on quantum states in Ref.~\cite{Vidal1999}, defined on boxes in Ref.~\cite{Geller2014}, and defined on steering assemblages in Refs.~\cite{Cavalcanti2016a,Sainz2016}.
(It was termed the LHS steering robustness in~\cite{Cavalcanti2016a} and the LHS robustness in~\cite{Sainz2016}.)

A closely related type-independent monotone is obtained by using a different function to quantify the noise being mixed with the given resource:
\begin{multline}
  \label{Eq:VariantAbsoluteRobustness}
  M_{\rm abs'} (R) = \min s \ \text{such that } \ (1-s) R + s S \in {\bf R}^{\rm free},\\
  s \geqslant 0, \quad S \in {\bf R}^{\rm free}, \quad \dop{S} = \dop{R}, \Top{S} = \Top{R}\;.
\end{multline}
 The value of this monotone is related by the variable transformation $M_{\text{abs}'}(R) = M_{\text{abs}}(R) / (1 + M_{\text{abs}}(R))$.

 This variant is the natural type-independent generalization of the monotone $M_{\text{abs}'}$ defined on boxes in Refs.~\cite{Vicente2014,wolfe2019quantifying}. (It was termed the robustness of nonlocality in Ref.~\cite{Vicente2014} and termed $M_{\text{RBST},\bf{L}}$ in Ref.~\cite{wolfe2019quantifying}.)

\subsection{Type-independent generalized robustness monotone}

Another related type-independent monotone is obtained by considering mixing a different set of resources, namely, by replacing $S \in {\bf R}^{\rm free}$ with $S \in {\bf R}$.
Then, in analogy with the generalized robustness introduced for quantum states in~\cite{Steiner2003}, we define:
\begin{multline}
  \label{Eq:GeneralizedRobustness}
  M_{\rm gen} (R) = \min s \ \text{such that } \ (R + s S)/(1 + s) \in {\bf R}^{\rm free},\\
  s \geqslant 0, \quad S \in {\bf R}, \quad \dop{S} = \dop{R}, \Top{S} = \Top{R}\;.
\end{multline}
The proof that this is a type-independent monotone is analogous to the proof for the absolute robustness.
The definition above subsumes the definition for quantum states~\cite{Steiner2003}, boxes~\cite{Geller2014}, and steering assemblages~\cite{Gallego2015,Piani2015a,Cavalcanti2016a}. (Note that the same quantity was called the {\em steering robustness} in Refs.~\cite{Piani2015a,Cavalcanti2016a} and the {\em robustness of steering} in Ref.~\cite{Gallego2015}.)

Because the set ${\bf R}_{\Acal\Bcal|\Xcal\Ycal}$ is SDP-representable in general (unlike the set $ {\bf R}^{\rm free}_{\Acal\Bcal|\Xcal\Ycal}$), computing this monotone can be more efficient than computing the absolute robustness.

  Noting that the enveloping set ${\bf R}_{\Acal\Bcal|\Xcal\Ycal}$ of resources includes post-quantum resources, one might be motivated to impose a further constraint on the enveloping set, choosing it instead to be the set ${\bf R}_{\Acal\Bcal|\Xcal\Ycal}$ of quantumly realizable LOSR resources (sometimes termed localizable resources~\cite{Beckman2001,DAriano2011}), as was done in~\cite{Geller2014}; doing so then defines a distinct monotone.
  However, the characterization of the set of quantumly realizable LOSR resources is an open problem (and for the case of boxes, the set is even not closed~\cite{Slofstra2017a}).
  We leave as an open question the impact of the choice of a type-independent enveloping theory on the operational significance of the generalized robustness~\cite{Takagi2019a}.

\subsection{Type-independent nonlocal weight monotone}
Another related type-independent monotone is given by the minimum weight on a nonfree resource with which one can mix a free resource to generate the given resource:
\begin{multline}
  \label{Eq:NonlocalWeight}
  M_{\rm NLW} (R) = \min e \ \text{such that } \ R = e \tilde{R} + (1-e) N\;,\\
  e \geqslant 0, \quad \tilde{R} \in {\bf R}, \quad N \in {\bf R}^{\rm free}\;, \\
  \dop{R} = \dop{\tilde{R}} = \dop{N}, \quad   \Top{R} = \Top{\tilde{R}} = \Top{N}\;.
\end{multline}
The proof that it is a type-independent monotone is again analogous to the proof for the absolute robustness.

This is the natural type-independent generalization of the nonlocal weight monotone~\cite{Portmann2012} defined on boxes, and also of the steering weight~\cite{Cavalcanti2016a} or {\em steerable weight}~\cite{Skrzypczyk2014,Gallego2015} monotone, defined on steering assemblages.
It is also linked~\cite{Geller2014} to the {\em best separable approximation} of quantum states~\cite{Lewenstein1998}.

\subsection{Type- and dimension- {\em dependent} random robustness measure}

We close by noting that the random robustness monotone defined in Ref.~\cite{Vidal1999} for quantum states {\em cannot} be easily generalized to a type-independent monotone.
The random robustness is defined to parallel the absolute robustness, but where the set of resources being mixed with the given resource is taken to be a singleton set containing only the ``maximally mixed'' resource of the same type as the given resource.
However, the resulting measure is {\em not} a type-independent LOSR monotone; indeed, it is not even a type-{\em dependent} monotone under LOSR operations (nor under LO operations or LOCC operations, for that matter).
As shown in Ref.~\cite{Vidal1999}, the random robustness can increase under local operations that change the dimensionality of one's resources. Ultimately, this is a consequence of the dimension-dependence of the maximally mixed resource; e.g., for two-qubit states the random robustness is based on the maximally mixed state $\mathbbm{1}_4 / 4$, while for two-qutrit states it is based on $\mathbbm{1}_9 / 9$.

However, if one restricts attention only to resources of a fixed type and a fixed dimension, then one can show that the random robustness behaves monotonically.
It is conceivable that the random robustness satisfies some new notion of type-independence which incorporates dimensional constraints, but it is unclear whether such a new notion is of any significance, and we do not pursue it here.

\section{Convex optimization and LP- and SDP- representability}
\label{App:ConvexOptimization}

We now present some practical tools which are useful for characterizing various important sets of resources, leveraging the fact that efficient numerical solvers such as MOSEK~\cite{ApS2015} are available to represent and optimize over convex sets that are defined using linear inequalities (such solvers are termed linear programs, or LPs) and linear matrix inequalities (such solvers are termed semidefinite programs, or SDPs)~\cite{Helton2009}.

Both LPs and SDPs optimize a linear objective over a convex set: linear programs optimize over {\em LP-representable sets}, while semidefinite programs optimize over {\em SDP-representable sets}.

\begin{definition}
  \label{Def:LPRep}
  A set $\textbf{X} \subseteq \mathbbm{R}^n$ is {\em LP-representable} if
  \begin{equation}
    \textbf{X} = \left \{ \vec{x} \in \mathbbm{R}^n \middle | \exists \vec{y} \in \mathbbm{R}^m : \vec{\gamma} = \vec{c} + P \vec{x} + Q \vec{y} \geqslant 0 \right \},
  \end{equation}
  where $\vec{c} \in \mathbbm{R}^p$, $P \in \mathbbm{R}^{p \times n}$ and $Q \in \mathbbm{R}^{p \times m}$, and $\vec{\gamma} \geqslant 0$ is interpreted componentwise.
\end{definition}

Thus, LP-representable sets are defined by linear inequality constraints, linear equality constraints (which can be written as pairs of inequalities), and projections of such sets (by moving the corresponding coordinates to the slack variables $\vec{y}$); we refer the reader to~\cite{Boyd2004} for a complete reference on such manipulations.

\begin{definition}
  \label{Def:SDPRep}
  A set $\textbf{X} \subseteq \mathbbm{R}^n$ is {\em SDP-representable} if
  \begin{equation}
    \label{Eq:SDPRep}
    \textbf{X} \! = \! \left \{ \vec{x} \in \mathbbm{R}^n \middle | \exists \vec{y} \in \mathbbm{R}^m : \Gamma\! = \! C \! + \! \sum_{i=1}^n \! x_i P_i \! + \! \sum_{j=1}^m y_i Q_i \geqslant 0 \right \},
  \end{equation}
  where $C, \{ P_i \}_i, \{ Q_j \}_j$ are symmetric matrices of size $p \times p$, and, for a symmetric matrix $M \in \mathbbm{R}^{p \times p}$, the constraint $\Gamma \geqslant 0$ is interpreted as $\Gamma$ being semidefinite positive.
\end{definition}
Thus, SDP-representable sets are defined by equality constraints of various types, semidefinite positiveness constraints, projections, and all the operations that define LP-representable sets (the inequality constraint $\vec{\gamma} \geqslant 0$ is rewritten as the SDP constraint of a diagonal matrix).

A variety of sets are LP- or SDP-representable.
For example, although the set $\textbf{X}$ in the definition above contains real vectors, sets of Hermitian matrices can be represented as well, by expanding on a basis with real coefficients. Again, we refer the reader to~\cite{Boyd2004} for more details.

\begin{definition}
  A {\em linear program} (resp. {\em semidefinite program}) corresponds to the problem
  \begin{equation}
    \max_{\vec{x} \in \textbf{X}} \vec{b}^{\top} \vec{x}\;,
  \end{equation}
  where $\textbf{X}$ is LP-representable (resp. SDP-representable).
\end{definition}
It is also useful to consider conic extensions of convex sets.

\begin{definition}
  \label{Def:ConicExtension}
  Let $\textbf{X}$ be a convex set.
  The {\em conic extension} $\hat{\textbf{X}} \supseteq \textbf{X}$ is
  \begin{equation}
    \hat{\textbf{X}} = \left \{ \alpha \vec{x} | \alpha \in \mathbbm{R}, \alpha \ge 0, \vec{x} \in \textbf{X} \right \}.
  \end{equation}
\end{definition}

We easily verify that if $\textbf{X}$ is LP-representable, then its conic extension $\hat{\textbf{X}}$ is also LP-representable: it is sufficient to add an additional variable $t \in \mathbbm{R}$ with the constraint $t \ge 0$ to Definition~\eqref{Def:LPRep}, and replace $\vec{c} + P \vec{x} + Q \vec{y} \geqslant 0$ by $\vec{c} t + P \vec{x} + Q \vec{y} \geqslant 0$.
By a similar argument, one can verify that if $\textbf{X}$ is SDP-representable, then $\hat{\textbf{X}}$ is also SDP-representable.

\section{The set ${\bf R}_{\mathcal{A}\mathcal{B}|\mathcal{X}\mathcal{Y}}$ of all  nonsignaling resources of given types and dimensionalities is SDP-representable}
\label{App:NSRepresentable}

Next, we introduce the useful Choi description of resources, and then show that the set of all nonsignaling resources of any given type and dimensionalities can be represented exactly by an SDP or LP.

\subsection{Choi state representation of resources}

In the main text, we described resources as completely positive, trace preserving linear maps.
Here, we introduce a second (equivalent) description of a resource $R \in \mathsf{D} (\Xcal\Ycal) \rightarrow \mathsf{D} (\mathcal{A}\mathcal{B})$ in terms of its Choi~{\cite{Choi1975,Jamiolkowski1972}} matrix $J_R \in \mathsf{D} \left( \Acal\Bcal\Xcal'\Ycal' \right)$.
The Hilbert spaces $\Xcal'$ and $\Ycal'$ are isomorphic to $\Xcal$ and $\Ycal$ respectively.
The distinction between $\Xcal$ and $\Xcal'$ is introduced for clarity in this section, and will be dropped later on.
Using the computational basis of those spaces, we construct the maximally entangled states $\phi_{\Xcal \Xcal'}$ and $\phi_{\Ycal \Ycal'}$:
\begin{equation}
  \phi_{\Xcal \Xcal'} = \frac{1}{\dop{\Xcal}} \sum_{i, i'=1}^{\dop{\Xcal}} \dyad{ i_{\Xcal} i_{\Xcal'} }{ i'_{\Xcal} i'_{\Xcal'} },
\end{equation}
\begin{equation}
  \phi_{\Ycal \Ycal'} = \frac{1}{\dop{\Ycal}} \sum_{j, j'=1}^{\dop{\Ycal}} \dyad{ j_{\Ycal} j_{\Ycal'} }{ j'_{\Ycal} j'_{\Ycal'} } .
\end{equation}
\begin{definition}
  The {\em Choi state} $J_R \in \mathsf{D} \left( \Acal\Bcal\Xcal'\Ycal' \right)$ corresponding to the resource $R$ is defined as:
  \begin{equation}
    \label{Eq:Choi}
    J_R = \left( R_{\Acal\Bcal|\Xcal\Ycal} \otimes \one_{\Xcal'\Ycal'} \right) \left[ \phi_{\Xcal\Ycal'} \otimes \phi_{\Ycal\Ycal'} \right] .
  \end{equation}
\end{definition}
One can write this out explicitly as
\begin{multline}
  J_R = \frac{1}{\dop{\Xcal} \dop{\Ycal}} \sum_{i, i' = 1}^{\dop{\Xcal}} \sum_{j, j' = 1}^{\dop{\Ycal}}
  R \left [ \dyad{i_{\Xcal}}{i'_{\Xcal}} \otimes \dyad{ j_{\Ycal} }{ j'_{\Ycal} } \right ] \\
  \otimes
  \dyad{i_{\Xcal'} }{ i'_{\Xcal'} } \otimes \dyad{j_{\Ycal'} }{ j'_{\Ycal'} } \; .
\end{multline}

\subsection{Constraints on the Choi state of a resource with special properties}
The constraint that a channel be completely positive is equivalent to the constraint on the Choi state that
\begin{equation}
  \label{Eq:ChoiSDP}
  J_R \in \mathsf{H}_+ \left( \Acal\Bcal\Xcal'\Ycal' \right),
\end{equation}
while the constraint that a channel be trace-preserving is equivalent to the constraint on the Choi state that
\begin{equation}
  \label{Eq:ChoiTP}
  \tr_{\Acal\Bcal} J_R = \frac{1}{\dop{\Xcal'} \dop{\Ycal'}} \one_{\Xcal'\Ycal'} \;.
\end{equation}

As discussed in the main text, we consider only no-signaling resources, as befits our assumption of space-like separation.
This translates into the following constraints on the Choi state:
\begin{align}
  \label{Eq:ChoiNS}
  \tr_\Acal J_R & = \frac{ \one_{\Xcal'} \otimes \tr_{\Acal \Xcal'}( J_R ) }{\dop{\Xcal'}} \nonumber\\
  \tr_\Bcal J_R & = \frac{ \one_{\Ycal'} \otimes \tr_{\Bcal \Ycal'}( J_R ) }{\dop{\Ycal'}}
\end{align}

Note that condition~{\eqref{Eq:ChoiNS}} implies condition~{\eqref{Eq:ChoiTP}} above.
These nonsignaling constraints can be written explicitly as follows.
For all $j_\Bcal, j_{\Xcal'}, j_{\Ycal'}, k_{\Bcal}, k_{\Ycal'}$:
\begin{multline}
  \label{Eq:CNonsignaling2}
  \sum_{j_{\Acal}} \matrixel{
    j_{\Acal} j_{\Bcal} j_{\Xcal'} j_{\Ycal'}
  }{J_R}{
    j_{\Acal} k_{\Bcal} k_{\Xcal'} k_{\Ycal'}
  } =\\
\frac{\delta_{j_{\Xcal'}, k_{\Xcal'}}}{\dop{\Xcal'}}
\sum_{j_{\Acal} k_{\Xcal'}}
\matrixel{
  j_{\Acal} j_{\Bcal} k_{\Xcal'} j_{\Ycal'}
}{J_R}{
  j_{\Acal} k_{\Bcal} k_{\Xcal'} k_{\Ycal'}
  },
\end{multline}
and for all $j_A, j_{\Xcal'}, j_{\Ycal'}, k_{\Acal}, k_{\Xcal'}$:
\begin{multline}
  \sum_{j_{\Bcal}} \matrixel{ j_{\Acal} j_{\Bcal} j_{\Xcal'} j_{\Ycal'} }{J_R}{ k_{\Acal} j_{\Bcal} k_{\Xcal'} k_{\Ycal'} } \\
  = \frac{\delta_{j_{\Ycal'}, k_{\Ycal'}}}{\dop{\Ycal'}}\sum_{j_{\Bcal} k_{\Ycal'}}
  \matrixel{ j_{\Acal} j_{\Bcal} j_{\Xcal'} k_{\Ycal'} }{J_R}{ k_{\Acal} j_{\Bcal} k_{\Xcal'} k_{\Ycal'}
   } .
 \end{multline}

Resources which have certain inputs or outputs that are classical satisfy further constraints, as we now show.
Trivial input or output Hilbert spaces correspond to one-dimensional Hilbert spaces in the tensor product in which the Choi state $J_R$ is defined, and thus such systems do not appear explicitly in the description of $J_R$ as a Hermitian matrix, and so imply no constraints beyond fixing the dimensionality.

\subsubsection*{Constraints from classicality of outputs and inputs }
As discussed in the main text, the statement that an output $\Acal$ is classical, denoted $\mathsf{T} [\Acal] = \mathsf{C}$, expresses the fact that the given resource always produces outputs on $\Acal$ that are diagonal in a fixed basis, as in Eq.~\eqref{Eq:ClassicalOutput}.
The Choi state of such a resource satisfies $\matrixel{i_{\Acal}}{J_R}{j_{\Acal}} = 0$ for $i_{\Acal} \neq j_{\Acal}$. Explicitly, for all $i_{\Acal}, i_{\Bcal}, i_{\Xcal'}, i_{\Ycal'}, j_{\Acal}, j_{\Bcal}, j_{\Ycal'}$ and $j_{\Ycal'}$ we have:
\begin{equation}
  \label{Eq:ClassicalOutput1}
  i_{\Acal} \neq j_{\Acal}
  \quad \Rightarrow \quad
  \matrixel{
    i_{\Acal} i_{\Bcal} i_{\Xcal'} i_{\Ycal'}
    }{J_R}{
      j_{\Acal} j_{\Bcal} j_{\Xcal'} j_{\Ycal'}
    } = 0 \;.
\end{equation}
The constraints for the systems of other parties are analogous.

The statement that an input $\Xcal$ is classical, denoted $\mathsf{T} [\Xcal] = \mathsf{C}$, expresses the fact that the given resource is defined only for input states that are diagonal in a fixed basis, as in Eq.~\eqref{Eq:ClassicalOutput}.
The Choi state of such a resource satisfies $\matrixel{ i_{\Xcal'} }{ J }{ j_{\Xcal'} } = 0$ for $i_{\Xcal'} \neq j_{\Xcal'}$.
Explicitly, for all $i_{\Acal}, i_{\Bcal}, i_{\Xcal'}, i_{\Ycal'}, j_{\Acal}, j_{\Bcal}, j_{\Xcal'}$ and $j_{\Ycal'}$ we have:
\begin{equation}
  \label{Eq:ClassicalInput1}
  i_{\Xcal'} \neq j_{\Xcal'}
  \quad \Rightarrow \quad
  \matrixel{
    i_{\Acal} i_{\Bcal} i_{\Xcal'} i_{\Ycal'}
  }{J_R}{
    j_{\Acal} j_{\Bcal} j_{\Xcal'} j_{\Ycal'}
  } = 0.
\end{equation}

Hence we have the following.
\begin{definition}
  \label{Def:ChoiNSResources}
The set $\textbf{J}_{\Acal\Bcal | \Xcal\Ycal}$ of Choi states corresponding to nonsignaling resources $\textbf{R}_{\Acal\Bcal | \Xcal\Ycal}$ of a given type and dimensionalities is equal to the set of Hermitian matrices that are
  \begin{itemize}
  \item completely positive~\eqref{Eq:ChoiSDP},
  \item trace preserving~\eqref{Eq:ChoiTP},
  \item nonsignaling~\eqref{Eq:ChoiNS},
  \item classical in the relevant Hilbert spaces: \eqref{Eq:ClassicalOutput1} and~\eqref{Eq:ClassicalInput1}.
  \end{itemize}
\end{definition}

This proposition follows.
\begin{proposition}
  \label{Prop:NSSetRepresentable}
  The set $\textbf{R}_{\Acal\Bcal | \Xcal\Ycal}$ of nonsignaling resources of any given type and dimensionalities is SDP-representable, as is the set $\textbf{J}_{\Acal\Bcal | \Xcal\Ycal}$ of corresponding Choi states.
  In the case where no input or output has quantum type, these sets are furthermore {\rm LP}-representable.
\end{proposition}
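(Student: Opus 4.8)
The plan is to read off the representability directly from the characterization of $\textbf{J}_{\Acal\Bcal|\Xcal\Ycal}$ given in Definition~\ref{Def:ChoiNSResources}, and then to transfer the conclusion to $\textbf{R}_{\Acal\Bcal|\Xcal\Ycal}$ via the Choi isomorphism. The crucial observation is that, among the four families of constraints defining $\textbf{J}_{\Acal\Bcal|\Xcal\Ycal}$, only complete positivity~\eqref{Eq:ChoiSDP} is a genuine semidefinite constraint; trace preservation~\eqref{Eq:ChoiTP}, no-signaling~\eqref{Eq:ChoiNS}, and the classicality conditions~\eqref{Eq:ClassicalOutput1} and~\eqref{Eq:ClassicalInput1} are all \emph{linear equalities} on the entries of $J_R$.

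First I would expand $J_R$ over a real basis of Hermitian matrices on $\Acal\Bcal\Xcal'\Ycal'$, collecting its coordinates into a real vector $\vec{x}$. Then the positivity constraint $J_R \geqslant 0$ is precisely the linear matrix inequality appearing in Definition~\ref{Def:SDPRep}, while each linear equality among the entries becomes a pair of linear inequalities on $\vec{x}$ (as noted in the discussion following Definitions~\ref{Def:LPRep} and~\ref{Def:SDPRep}). Hence $\textbf{J}_{\Acal\Bcal|\Xcal\Ycal}$ is SDP-representable. Because the map $R \mapsto J_R$ of Eq.~\eqref{Eq:Choi} is an invertible linear map, and SDP-representability is preserved under invertible affine maps together with the projection and intersection operations allowed by Definition~\ref{Def:SDPRep}, it follows that $\textbf{R}_{\Acal\Bcal|\Xcal\Ycal}$ is SDP-representable as well.

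For the LP claim, I would argue that when no system is quantum the semidefinite constraint collapses to a set of linear inequalities. If every system is trivial or classical, then for each party the relevant classicality condition~\eqref{Eq:ClassicalOutput1} or~\eqref{Eq:ClassicalInput1} forces every entry of $J_R$ with unequal indices on that system to vanish, a one-dimensional trivial system contributing no off-diagonal entries at all. Since this holds simultaneously for $\Acal$, $\Bcal$, $\Xcal'$, and $\Ycal'$, the only surviving entries of $J_R$ are the diagonal ones. A diagonal Hermitian matrix is positive semidefinite if and only if each of its diagonal entries is nonnegative, so $J_R \geqslant 0$ reduces to a finite family of linear inequalities. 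Together with the remaining linear equalities this shows that $\textbf{J}_{\Acal\Bcal|\Xcal\Ycal}$---and hence, by the same invertible-map argument, $\textbf{R}_{\Acal\Bcal|\Xcal\Ycal}$---is LP-representable in this case.

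The only point requiring genuine care is the diagonal reduction in the LP case: I must verify that the classicality constraints, holding jointly across all four subsystems, annihilate \emph{every} off-diagonal entry, so that positivity truly reduces to componentwise nonnegativity rather than merely to a block-diagonal semidefinite condition. The SDP direction is otherwise immediate, being essentially a restatement of Definition~\ref{Def:ChoiNSResources} in the language of Definition~\ref{Def:SDPRep}.
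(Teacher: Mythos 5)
Your proposal is correct and follows essentially the same route as the paper's proof: both read off SDP-representability from the fact that the Choi-state characterization consists of one positive-semidefiniteness constraint plus linear equalities, and both obtain the LP claim by observing that when no system is quantum the classicality constraints force the Choi matrix to be diagonal, reducing positivity to componentwise nonnegativity. Your version merely spells out a few steps the paper leaves implicit (the real-basis expansion, the invertible Choi map, and the check that the joint classicality constraints annihilate every off-diagonal entry), all of which are sound.
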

\begin{proof}
All the constraints in Definition~\ref{Def:ChoiNSResources} are of the form that define SDP-representable sets.
Thus, the set $\textbf{J}_{\Acal\Bcal | \Xcal\Ycal}$ (for any given types and dimensionalities) is SDP-representable, as is the corresponding set $\textbf{R}_{\Acal\Bcal | \Xcal\Ycal}$.
  When no inputs or outputs have quantum type, the Choi matrix is fully diagonal, and hence the semidefinite constraint~\eqref{Eq:ChoiSDP} reduces to a linear inequality constraint, which means that all the constraints are LP-representable.
\end{proof}

\section{Representability of the set of LOSR-free resources of any type}
\label{App:FreeRepresentable}
In Section~\ref{App:NSRepresentable} we discussed the characterization of the set $\textbf{R}_{\Acal\Bcal | \Xcal\Ycal}$ of nonsignaling resources, by characterizing the set $\textbf{J}_{\Acal\Bcal | \Xcal\Ycal}$ of Choi states of nonsignaling resources.
We now turn our attention to the harder but more important task of characterizing the set of {\em LOSR-free} resources $\textbf{R}^{\rm free}_{\Acal\Bcal | \Xcal\Ycal}$, by characterizing the set $\textbf{J}_{\Acal\Bcal | \Xcal\Ycal}^{\text{free}}$ of the corresponding Choi states.

We first describe particular types for which the set of free resources can be described exactly using an LP or SDP.
We then describe in detail how, for any given type and dimensionalities, this set can be represented by an SDP hierarchy.

\subsection{Exactly LP- or SDP-representable resource types}
\label{App:ExactFreeRepresentable}

It is widely known that nonlocal boxes of type $\mathsf{CC} \rightarrow \mathsf{CC}$ have an exact representation using linear programs~{\cite{Fine1982,Garg1984,Pitowsky1991}}.
It is also known that steering assemblages of the form $\mathsf{CI}
\rightarrow \mathsf{CQ}$ can be represented using semidefinite programs~{\cite{Cavalcanti2017}}.
We generalize these results.
To simplify the discussion, we will in this section refer to a party with input $\Xcal$ and output $\Acal$ as a {\em classical party} $\Acal|\Xcal$ whenever $(\mathsf{T} [\Acal], \mathsf{T} [\Xcal]) \in \big \{ (\mathsf{C}, \mathsf{C}), (\mathsf{I}, \mathsf{C}), (\mathsf{C}, \mathsf{I}), (\mathsf{I}, \mathsf{I}) \big \}$.

We first give some special types for which the set of LOSR-free resources of that type admit of an exact LP or an exact SDP representation.

For single party resources, the situation is simple.

\begin{proposition}
  The set $\Jbf_{\Acal | \Xcal}^{\text{free}}$ of free single-party resources of any type is always SDP-representable.
Moreover, $\Jbf_{\Acal | \Xcal}^{\text{free}}$ is LP-representable when the party $\Acal|\Xcal$ is classical, that is, whenever neither the input nor the output of the party contains any quantum component.
\end{proposition}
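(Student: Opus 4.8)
The plan is to analyze a single-party resource $R_{\Acal|\Xcal}$ through its Choi matrix $J_R \in \Dop{\Acal\Xcal'}$ and observe that, for a single party, \emph{every} no-signaling resource is automatically LOSR-free. Indeed, Definition~\ref{Def:FreeResources} concerns bipartite decompositions, but a single-party channel is trivially a product over the (one) party it involves, so the free set and the full set coincide: $\Jbf_{\Acal|\Xcal}^{\text{free}} = \Jbf_{\Acal|\Xcal}$. Consequently the representability question for the free set reduces entirely to the representability of the no-signaling (here just the CPTP plus classicality) set, which was already settled for the general multiparty case.

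First I would make the identification $\Jbf_{\Acal|\Xcal}^{\text{free}} = \Jbf_{\Acal|\Xcal}$ precise. A single-party resource has no partner to signal to, so the no-signaling constraints of Eq.~\eqref{Eq:ChoiNS} are vacuous, and the defining conditions collapse to complete positivity~\eqref{Eq:ChoiSDP}, trace preservation~\eqref{Eq:ChoiTP}, and any applicable classicality constraints~\eqref{Eq:ClassicalOutput1},~\eqref{Eq:ClassicalInput1}. Each such resource is manifestly a degenerate (single-term) instance of the convex decomposition in~\eqref{Eq:RConvexDecomposition}, hence free.

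Then I would invoke Proposition~\ref{Prop:NSSetRepresentable}, specialized to one party, which states that the set of no-signaling resources of any given type and dimensionalities is SDP-representable, and is furthermore LP-representable when no system has quantum type. Combining this with the equality $\Jbf_{\Acal|\Xcal}^{\text{free}} = \Jbf_{\Acal|\Xcal}$ from the previous step immediately yields SDP-representability of $\Jbf_{\Acal|\Xcal}^{\text{free}}$ in general, and LP-representability precisely when the party $\Acal|\Xcal$ is classical, i.e.\ when neither input nor output has a quantum component so that the Choi matrix is diagonal and the semidefinite constraint~\eqref{Eq:ChoiSDP} degenerates to componentwise nonnegativity.

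I do not anticipate a substantive obstacle here: the whole content is the observation that single-party no-signaling resources are automatically free, after which the result is a direct corollary of Proposition~\ref{Prop:NSSetRepresentable}. The only point requiring minor care is confirming that the classicality constraints~\eqref{Eq:ClassicalOutput1} and~\eqref{Eq:ClassicalInput1}, when present, preserve diagonality of $J_R$ so that the LP reduction genuinely applies in the classical case; this is exactly the reasoning already used in the proof of Proposition~\ref{Prop:NSSetRepresentable}.
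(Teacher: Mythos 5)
Your proposal is correct and follows essentially the same route as the paper: both rest on the identification $\Jbf_{\Acal|\Xcal}^{\text{free}} = \Jbf_{\Acal|\Xcal}$ for a single party and then invoke Proposition~\ref{Prop:NSSetRepresentable} for the SDP (and, in the fully classical case, LP) representability. Your version merely spells out the details more explicitly.
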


\begin{proof}
  This follows directly from the fact that $\Jbf_{\Acal|\Xcal}^{\text{free}} = \Jbf_{\Acal | \Xcal}$; that is, the set of free single-party resources coincides with the set of all no-signaling single-party resources, which by Proposition~\ref{Prop:NSSetRepresentable} is SDP-representable.
  The LP-representable case follows from that proposition as well.
\end{proof}

One can then show the following; note that here, with respect to the notation introduced above, we dropped the prime superscripts on the input spaces.

\begin{proposition}
  \label{Prop:SDPRepresentable}
  Consider the $(N+1)$-party resource with systems $\Acal$, $\Bcal_1, \ldots, \Bcal_N$, $\Xcal$, $\mathcal{Y}_1, \ldots, \mathcal{Y}_N$, and define $\Bcalov = \Bcal_1 \otimes \ldots \otimes \Bcal_N$, and similarly for $\Ycalov$.
  Let the party $\Acal|\Xcal$ be classical and the other systems be of arbitrary type and dimension, with $\Jbf_{\Bcalov|\Ycalov}^{\text{free}}$ SDP-representable.
  Then $\Jbf_{\Acal\Bcalov|\Xcal\Ycalov}^{\text{free}}$ is also SDP-representable.
\end{proposition}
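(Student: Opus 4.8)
The plan is to reduce membership in $\Jbf^{\text{free}}_{\Acal\Bcalov|\Xcal\Ycalov}$ to finitely many membership constraints in the (already SDP-representable) conic extension of $\Jbf^{\text{free}}_{\Bcalov|\Ycalov}$, exploiting the fact that a classical party possesses only finitely many deterministic strategies.

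First I would unpack what freeness means once the $N$ parties $\Bcal_1,\ldots,\Bcal_N$ are grouped into the single composite party $\Bcalov|\Ycalov$. Starting from the full LOSR decomposition of $R_{\Acal\Bcalov|\Xcal\Ycalov}$ into products of single-party resources and collecting the $\Bcal$-factors of each product term, I would argue that $R_{\Acal\Bcalov|\Xcal\Ycalov}$ is free if and only if it can be written as $\sum_i p_i\, R^i_{\Acal|\Xcal}\otimes R^i_{\Bcalov|\Ycalov}$, where $R^i_{\Acal|\Xcal}$ is a (necessarily free) single-party resource and $R^i_{\Bcalov|\Ycalov}\in\Jbf^{\text{free}}_{\Bcalov|\Ycalov}$. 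Passing to Choi states, and using that the Choi state of a product resource is the tensor product of the individual Choi states, this reads $J_R=\sum_i p_i\, J_{R^i_{\Acal|\Xcal}}\otimes J_{R^i_{\Bcalov|\Ycalov}}$.

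The decisive step comes next. Because $\Acal|\Xcal$ is classical, the immediately preceding proposition shows that its single-party resource set is LP-representable; being additionally bounded (trace-normalized classical channels), it is a polytope whose vertices are the finitely many deterministic classical channels $\{D^\lambda\}_\lambda$ (the conditional functions $a=f_\lambda(x)$, reducing to point distributions when the input is trivial). Writing each $R^i_{\Acal|\Xcal}=\sum_\lambda q^i_\lambda D^\lambda$ and collecting terms yields $J_R=\sum_\lambda J_{D^\lambda}\otimes G_\lambda$, where $G_\lambda:=\sum_i p_i q^i_\lambda\, J_{R^i_{\Bcalov|\Ycalov}}$ is a nonnegative combination of elements of the convex set $\Jbf^{\text{free}}_{\Bcalov|\Ycalov}$ and therefore lies in its conic extension $\hat{\Jbf}^{\text{free}}_{\Bcalov|\Ycalov}$, with $\sum_\lambda\tr G_\lambda=1$. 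Conversely, any tuple $(G_\lambda)_\lambda$ obeying $G_\lambda\in\hat{\Jbf}^{\text{free}}_{\Bcalov|\Ycalov}$ and $\sum_\lambda\tr G_\lambda=1$ reconstructs a genuine free resource: writing $G_\lambda=(\tr G_\lambda)\,\rho_\lambda$ with $\rho_\lambda\in\Jbf^{\text{free}}_{\Bcalov|\Ycalov}$ gives $J_R=\sum_\lambda(\tr G_\lambda)\,J_{D^\lambda}\otimes\rho_\lambda$, a convex combination of products of free resources. This establishes the exact equality of the two descriptions.

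Finally I would read off SDP-representability. The matrices $J_{D^\lambda}$ are fixed and finite in number; the relation $J_R=\sum_\lambda J_{D^\lambda}\otimes G_\lambda$ together with the normalization $\sum_\lambda\tr G_\lambda=1$ is linear in $J_R$ and the $G_\lambda$; and each constraint $G_\lambda\in\hat{\Jbf}^{\text{free}}_{\Bcalov|\Ycalov}$ is SDP-representable, since $\Jbf^{\text{free}}_{\Bcalov|\Ycalov}$ is SDP-representable by hypothesis and conic extensions of SDP-representable sets are SDP-representable (Definition~\ref{Def:ConicExtension} and the remark following it). Introducing the $G_\lambda$ as slack variables and projecting them out preserves SDP-representability, yielding the claim. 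The main obstacle I anticipate is the third step: proving rigorously that restricting Alice to her finitely many deterministic vertices loses nothing, i.e.\ the exactness of the two-way equivalence, together with the bookkeeping that $G_\lambda$ remains in the conic hull with the correct total trace. The remaining ingredients---the tensor structure of Choi states and closure of SDP-representability under finite linear combinations, conic extension, and projection---are routine given the tools already established.
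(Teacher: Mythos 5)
Your proposal is correct and follows essentially the same route as the paper: decompose the free resource over the finitely many deterministic strategies of the classical party $\Acal|\Xcal$, absorb the convex weights into the $\Bcalov|\Ycalov$ factors so that each lies in the conic extension $\hat{\Jbf}^{\text{free}}_{\Bcalov|\Ycalov}$, and conclude SDP-representability from linearity. Your write-up is in fact slightly more complete than the paper's, since you explicitly verify the converse direction (that any tuple $(G_\lambda)$ satisfying the constraints reconstructs a genuinely free resource) and record the trace-normalization bookkeeping, both of which the paper leaves implicit.
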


\begin{proof}
Any free $J_{\Acal\Bcalov\Xcal\Ycalov}$ can be written as
  \begin{equation} \label{Eq:decomp}
    J_{\Acal\Bcalov\Xcal\Ycalov} = \sum p_i J^i_{\Acal\Xcal} \otimes J^i_{\Bcalov\Ycalov}\;.
  \end{equation}
   When party $\Acal|\Xcal$ is classical, the set $\Jbf_{\Acal | \Xcal} = \Jbf_{\Acal | \Xcal}^{\text{free}}$ has a finite number of extremal points, one for each deterministic strategy. Denoting the finite set of associated Choi states, indexed by $\lambda$, as $\{ J^{\lambda}_{\Acal\Xcal} \}_{\lambda}$, one can rewrite Eq.~\eqref{Eq:decomp} (without loss of generality) as
     \begin{equation}
      J_{\Acal\Bcalov\Xcal\Ycalov} = \sum_{\lambda} q_{\lambda} J^{\lambda}_{\Acal\Xcal} \otimes J^{\lambda}_{\Bcalov\Ycalov}\;.
  \end{equation}
  for some probability distribution $q_{\lambda}$ and where (for all $\lambda$) $\left \{ \hat{J}^{\lambda}_{\Bcalov\Ycalov} \right \}_\lambda \subset \hat{\Jbf}_{\Bcalov|\Ycalov}^{\text{free}}$.
   By further defining  $\hat{J}^{\lambda}_{\Bcalov \Ycalov} := q_{\lambda} J^{\lambda}_{\Bcalov | \Ycalov}$, one can simplify this to
    \begin{equation} \label{Eq:linsum}
      J_{\Acal\Bcalov\Xcal\Ycalov} = \sum J^{\lambda}_{\Acal\Xcal} \otimes \hat{J}^{\lambda}_{\Bcalov\Ycalov}.
  \end{equation}
  Now, if $\Jbf_{\Bcalov|\Ycalov}^{\text{free}}$ is SDP-representable, then so too is the set of objects of the form $J_{\Acal\Bcalov\Xcal\Ycalov}$, since such objects are simply linear combinations of objects drawn from an SDP-representable set. (In particular, for each classical value that system $\Xcal$ and $\Acal$ can take, Eq.~\eqref{Eq:linsum} constitutes a linear combination of objects $\hat{J}^{\lambda}_{\Bcalov\Ycalov}$ which are each drawn from an SDP-representable set.)
\end{proof}

A similar proposition holds for LP-representability, with an analogous proof.
\begin{proposition}
  \label{Prop:LPRepresentable}
  Let party $\Acal|\Xcal$ be classical, and $\Jbf_{\Bcalov|\Ycalov}^{\text{free}}$ be LP-representable.
  Then $\textbf{J}_{\Acal\Bcalov|\Xcal\Ycalov}^{\text{free}}$ is also LP-representable.
\end{proposition}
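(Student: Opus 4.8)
The plan is to mirror the proof of Proposition~\ref{Prop:SDPRepresentable} essentially verbatim, replacing every appeal to SDP-representability with the corresponding appeal to LP-representability. First I would start from the convex decomposition of a generic free resource on the $(N+1)$-party system, $J_{\Acal\Bcalov\Xcal\Ycalov} = \sum_i p_i\, J^i_{\Acal\Xcal} \otimes J^i_{\Bcalov\Ycalov}$, guaranteed by Definition~\ref{Def:FreeResources}. Since the party $\Acal|\Xcal$ is classical, Proposition~\ref{Prop:NSSetRepresentable} tells us that $\Jbf_{\Acal|\Xcal} = \Jbf^{\text{free}}_{\Acal|\Xcal}$ is not only LP-representable but is in fact a polytope with finitely many extreme points, one per deterministic classical strategy. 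I would index these extremal Choi states by $\lambda$, writing $\{J^{\lambda}_{\Acal\Xcal}\}_\lambda$, and use convexity to re-expand the decomposition over this finite extremal set as $J_{\Acal\Bcalov\Xcal\Ycalov} = \sum_\lambda q_\lambda\, J^{\lambda}_{\Acal\Xcal} \otimes J^{\lambda}_{\Bcalov\Ycalov}$.

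Next I would absorb the weights into the second factor by defining $\hat{J}^{\lambda}_{\Bcalov\Ycalov} := q_\lambda J^{\lambda}_{\Bcalov\Ycalov}$, which lands in the conic extension $\hat{\Jbf}^{\text{free}}_{\Bcalov|\Ycalov}$; by the remarks in Section~\ref{App:ConvexOptimization}, the conic extension of an LP-representable set is again LP-representable. This yields the finite linear expression $J_{\Acal\Bcalov\Xcal\Ycalov} = \sum_\lambda J^{\lambda}_{\Acal\Xcal} \otimes \hat{J}^{\lambda}_{\Bcalov\Ycalov}$. Because the fixed operators $J^{\lambda}_{\Acal\Xcal}$ are finitely many known constants and each $\hat{J}^{\lambda}_{\Bcalov\Ycalov}$ ranges over the (conically extended) LP-representable set $\hat{\Jbf}^{\text{free}}_{\Bcalov|\Ycalov}$, the set of all such $J_{\Acal\Bcalov\Xcal\Ycalov}$ is the image of a product of LP-representable sets under a fixed linear map, hence itself LP-representable after projecting out the slack variables $\hat{J}^{\lambda}_{\Bcalov\Ycalov}$.

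The one structural point I would flag explicitly is that, exactly as in the SDP case, each classical setting value $(i_\Xcal, i_\Acal)$ selects out a linear combination of the $\hat{J}^{\lambda}_{\Bcalov\Ycalov}$, so the claim reduces to the fact that linear combinations (with fixed coefficients, summed over $\lambda$) of objects drawn from a single LP-representable set form an LP-representable set. This is the analogue of the SDP closure property used before, and it is immediate from Definition~\ref{Def:LPRep}: introducing each $\hat{J}^{\lambda}_{\Bcalov\Ycalov}$ as a block of slack variables constrained to lie in $\hat{\Jbf}^{\text{free}}_{\Bcalov|\Ycalov}$ and imposing the single linear equality that ties these blocks to the entries of $J_{\Acal\Bcalov\Xcal\Ycalov}$ produces a description of precisely the LP-representable form.

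I do not anticipate a genuine obstacle here, since the only ingredient that distinguished the SDP proof was the type of representability being propagated, and LP-representability enjoys exactly the same closure under finite linear combinations, projection, and conic extension that the SDP proof relied upon. The sole thing to be careful about is ensuring that the finitely-many-extreme-points property of the classical party $\Acal|\Xcal$ is what makes the sum over $\lambda$ finite---without it the decomposition would involve a continuum and the linear-combination argument would fail---but this finiteness is precisely what Proposition~\ref{Prop:NSSetRepresentable} supplies in the fully classical (LP) case.
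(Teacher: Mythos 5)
Your proposal is correct and follows essentially the same route as the paper, which explicitly states that Proposition~\ref{Prop:LPRepresentable} has a proof "analogous" to that of Proposition~\ref{Prop:SDPRepresentable}: the finite extremal decomposition over deterministic strategies of the classical party, the absorption of the weights into conically-extended factors, and the closure of LP-representable sets under finite linear combinations and projections are exactly the ingredients the paper intends. No gaps.
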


One final special case worth noting is the type $\Isf\Isf \rightarrow \Qsf\Qsf$, which is SDP-representable when $\dop{\Acal} \dop{\Bcal} \leqslant 6$, by virtue of the PPT criterion~{\cite{Horodecki1996}}.

A few applications of these propositions are given in Table~\ref{Tab:ExactFreeSets}.

\begin{table}[t]
  \begin{tabular}{lll}
   Resource name  & Type & Repr. \\
    \hline
    Boxes & $\Csf\Csf \to \Csf\Csf$ & LP \\
    Assemblages & $\Csf\Isf \to \Csf\Qsf$ & SDP \\
    MDI steering assemblages & $\Csf\Qsf \to \Csf\Csf$ & SDP \\
    Bob w/input steering assemblages & $\Csf\Csf \to \Csf\Qsf$ & SDP \\
    Channel assemblages & $\Csf\Qsf \to \Csf\Qsf$ & SDP
  \end{tabular}
  \caption{
    \label{Tab:ExactFreeSets}
    Types with exact LP or SDP representability of their free sets.
  }
\end{table}

\subsection{A hierarchy of SDPs for the characterization of free resources of arbitrary types}
\label{App:Hierarchy}

The set of LOSR-free resources of more general types do not admit of exact LP or exact SDP representations, and so one must resort to a hierarchy of approximations, which we will now give, after introducing some useful preliminaries.

\subsubsection*{Symmetric extensions}

Let $J \in \mathsf{D} (\Acal\Xcal\Bcal\Ycal)$ be a Choi state satisfying Definition~\ref{Def:ChoiNSResources}.
Here, we have reordered Hilbert spaces on $J$ and other Choi states to fit the content of this section.
We write $\Bcal_1^n =\Bcal_1 \otimes \cdots \otimes \Bcal_n$ for the tensor product of $n$ copies of spaces isomorphic to $\Bcal$, and similarly for $\Ycal_1^n$, and we write $\Bcal_1^n \otimes \Ycal_1^n = (\Bcal\Ycal)_1^n$.
This is to be contrasted with $\Bcalov$ defined in Proposition~\ref{Prop:SDPRepresentable}, where the tensor factors were not necessarily isomorphic.
Below, we will consider operators of the form
$\overline{J}_{\Acal\Xcal (\Bcal\Ycal)_1^n}  \in \mathsf{D} (\Acal\Xcal\Bcal_1^n \Ycal_1^n)$.

We write the symmetric group of degree $n$ as $\mathcal{S}_n$, which we identify with the set of bijections $\{1, \ldots, n\} \to \{1, \ldots, n\}$.
Consider the unitary representation
\[
  \Pi: \begin{cases} \mathcal{S}_n & \to \mathcal{U}(\Bcal_1^n\Ycal_1^n) \\ s & \mapsto \Pi_s \end{cases}
\]
of this group, where, for $ \ket{j_1 ... j_n k_1 ... k_n} \in \mathbbm{C}^{\dop{\Bcal_1}} \otimes ... \mathbbm{C}^{\dop{\Bcal_n}} \otimes \mathbbm{C}^{\dop{\Ycal_1}} \otimes ... \mathbbm{C}^{\dop{\Ycal_n}}$:
\begin{equation}
  \label{Eq:PartyPermutation}
  \Pi_s \ket{j_1 ... j_n k_1 ... k_n} =  \ket{j_{s(1)} ... j_{s(n)} k_{s(1)} ... k_{s(n)}}.
\end{equation}

We say that an operator
$\overline{J}_{\Acal\Xcal (\Bcal\Ycal)_1^n}$ is {\em symmetric} if
\begin{equation}
  \label{Eq:ExtensionIsSymmetric}
  \Pi_s \overline{J}_{\Acal\Xcal (\Bcal\Ycal)_1^n} \Pi_s^\dagger = \overline{J}_{\Acal\Xcal (\Bcal\Ycal)_1^n}, \quad \forall s \in \mathcal{S}_n.
\end{equation}

We are now ready to define the set of Choi states with symmetric extensions, where each such Choi state is in one-to-one correspondence with a resource that has a symmetric extension (as defined in the main text).

\begin{definition}
  \label{Def:ChoiStateSymExt}
  The Choi state $J \in \mathsf{D}(\Acal\Xcal\Bcal\Ycal)$ has an $n$-symmetric extension if there exists a state $\overline{J}  \in \mathsf{D} (\Acal\Xcal\Bcal_1^n \Ycal_1^n)$ which obeys the following conditions:
  \begin{enumerate}
  \item The state $\overline{J}$ corresponds to a nonsignaling resource as in Definition~\ref{Def:ChoiNSResources}.
  \item The reduced state
    \[
      \overline{J}_{\Acal\Bcal_1\Xcal\Ycal_1} = \tr_{\Bcal_2 \ldots \Bcal_n \Ycal_2 \ldots \Ycal_n} \overline{J}_{\Acal\Xcal (\Bcal\Ycal)_1^n}
    \]
    is equal to $J_{\Acal\Bcal\Xcal\Ycal}$ itself.
  \item The state $\overline{J}$ is invariant under any joint permutation (in the sense of Eq.~\eqref{Eq:PartyPermutation}) of the $n$ parties of $\Bcal_1^n$ and $\Ycal_1^n$.
  \end{enumerate}
  We denote the set of Choi states having an $n$-symmetric extension as  $\Jbf^{(n)}_{\Acal\Bcal|\Xcal\Ycal} \subseteq \Jbf_{\Acal\Bcal|\Xcal\Ycal}$.
\end{definition}
\noindent Note that if one considers a Choi state of a resource with systems of a special type (trivial or classical), then the copied systems in the extended Choi state will (by symmetry) also be of that special type.
Hence, any constraints from the cases when $\Bcal$ or $\Ycal$ are classical will automatically imply that the copies of these systems satisfy the same constraints.

The set $\Jbf^{(n)}_{\Acal\Bcal|\Xcal\Ycal}$ is SDP-representable, as it is defined as the set of nonsignaling Choi states which satisfy additional {\em linear} constraints.

\subsubsection*{Constructing the hierarchy of SDPs}

As stated in the main text, any LOSR-free resource has an $n$-symmetric extension for all $n$, since it has a decomposition according to Eq.~\eqref{Eq:RConvexDecomposition} of the main text, which can be extended to an arbitrary number of copies of $\mathcal{B}|\mathcal{Y}$ as in~\eqref{Eq:ExplicitSymmetricExtension} of the main text.
Hence, the Choi state $J_{\mathcal{AXBY}}$ corresponding to such a resource has an $n$-symmetric extension for all $n$ as well.

It is not difficult to see that the set of Choi states with $n$-symmetric extensions includes the set of states with $(n-1)$-symmetric extensions.
Thus, we have, mirroring~\eqref{Eq:ResourceHierarchy} of the main text:
\begin{equation}
\Jbf_{\Acal\Bcal|\Xcal\Ycal} \supseteq  \Jbf^{(1)}_{\Acal\Bcal|\Xcal\Ycal} \supseteq \ldots  \Jbf^{(n)}_{\Acal\Bcal|\Xcal\Ycal} \ldots \supseteq \Jbf^\text{free}_{\Acal\Bcal|\Xcal\Ycal}\;.
\end{equation}

Because each set in the hierarchy is SDP-representable, this constitutes a series of tests which is guaranteed to (at some finite level) witness the fact that any given nonfree resource (of any type) is indeed nonfree.

\subsection{Convergence of the SDP hierarchy}

To prove that this sequence converges on the set $\Jbf^\text{free}_{\Acal\Bcal|\Xcal\Ycal}$, we use Theorem~3.4 of Ref.~\cite{Berta2018}.
For convenience, we rewrite the elements of Definition~\ref{Def:ChoiStateSymExt} in the notation of~\cite{Berta2018}, before using their theorem to prove convergence.

The no-signaling condition can be given more explicitly as a pair of constraints: no-signaling from $\Xcal$ to any other party (as in Eq.~\eqref{Eq:RNonsignalingBXY} of the main text) and no-signaling from $\Ycal_n$ to any other party.
By the invariance under joint permutation of parties, this last condition guarantees that for any $n$-symmetric extension, none of the $n$ input systems of $\Bcal_1^n$ can signal to any output of any other party.

To write these in the notation of Ref.~\cite{Berta2018}, we first denote the trace operation as a linear operator, as follows. For generic operators $\Omega_{\Acal\Xcal} \in \mathsf{H}(\Acal\Xcal)$ and $\Omega_{\Bcal_n \Ycal_n} \in \mathsf{H}(\Bcal_n \Ycal_n)$, we define
\begin{equation}
  \Lambda_{\Acal\Xcal \rightarrow \Xcal}(\Omega_{\Acal\Xcal}) = \tr_{\Acal} [\Omega_{\Acal\Xcal}]
\end{equation}
and
\begin{equation}
\Gamma_{\Bcal_n \Ycal_n \rightarrow \Ycal_n}[\Omega_{\Bcal_n \Ycal_n}] = \tr_{\Bcal_n}[\Omega_{\Bcal_n \Ycal_n}].
\end{equation}
Then the constraint of no-signaling from $\mathcal{A}$ to $\mathcal{X}$ is then expressed as
\begin{equation}
  \label{Eq:ANonsignaling}
  \Lambda_{\Acal\Xcal \rightarrow \Xcal} [\overline{J}_{\Acal\Xcal (\Bcal\Ycal)_1^n}] = X_{\Xcal} \otimes \overline{J}_{(\Bcal\Ycal)_1^n},
\end{equation}
where $X_{\Xcal} = \one_{\Xcal} / \dop{\Xcal}$.
The constraint of no-signaling from $\Ycal_n$ to $\Bcal_n$ is then expressed as
\begin{equation}
  \label{Eq:BNonsignaling}
  \Gamma_{\Bcal_n \Ycal_n \rightarrow \Ycal_n}[\overline{J}_{\Acal\Xcal (\Bcal\Ycal)_1^n}] = \overline{J}_{\Acal\Xcal(\Bcal\Ycal)_1^{n - 1}} \otimes Y_{\Ycal_n},
\end{equation}
where $Y_{\Ycal_n} = \one_{\Ycal_n} / \dop{\Ycal_n}$.

Finally, we reexpress the linear constraints from classicality in the notation of Ref.~\cite{Berta2018}.
For a generic operator on an output space, e.g. $\Acal$, we define the classicality operator by
\begin{equation}
  \Xi_{\Acal}(\Omega_\Acal) = \Omega_\Acal - \sum_{j} \matrixel{j}{\Omega_\Acal}{j} \dyad{j}{j}
\end{equation}
for $\Omega_\Acal \in \mathsf{H}(\Acal)$.
For a generic operator on an input space, e.g. $\Xcal$, we define the classicality operator by
\begin{equation}
  \Xi_{\Xcal}(\Omega_\Xcal) = \Omega_\Xcal - \sum_{k} \matrixel{k}{\Omega_\Xcal}{k} \dyad{k}{k}
\end{equation}
 for $\Omega_\Xcal \in \mathsf{H}(\Xcal)$.
Classicality of $\Acal$ and $\Xcal$ (respectively) are then expressed as
\begin{align}
  \label{Eq:AClassical}
 \Xi_{\Acal} &[\overline{J}_{\Acal\Xcal (\Bcal\Ycal)_1^n}] = 0, \\
  \Xi_{\Xcal} &[\overline{J}_{\Acal\Xcal (\Xcal\Ycal)_1^n}] = 0,
\end{align}
while classicality of $\Bcal_n$ and $\Ycal_n$ (respectively) are expressed as
\begin{align}
  \label{Eq:BClassical}
 \Xi_{\Bcal_n} &[\overline{J}_{\Acal\Xcal(\Bcal\Ycal)_1^n}] = 0,\\
  \Xi_{\Ycal_n} &[\overline{J}_{\Acal\Xcal(\Bcal\Ycal)_1^n}] = 0.
\end{align}
Because $\overline{J}_{\Acal\Xcal(\Bcal\Ycal)_1^n}$ is symmetric under exchange of parties, these latter two constraints imply that all copies of $\Bcal$ and $\Ycal$ satisfy the analogous classicality constraints.

\subsubsection*{Convergence of the hierarchy}

Finally, we can leverage Theorem~3.4 of Ref.~\cite{Berta2018} to prove that if there exists a symmetric extension of a Choi state for all $n$, then that Choi state corresponds to a free resource.
In other words, the hierarchy just given converges on the set of Choi states associated to free resources.
It is worth noting that Theorem~3.4 of Ref.~\cite{Berta2018} follows from an application of the de Finetti theorem for quantum channels, which roughly states that any element of an exchangeable sequence of quantum channels can be written as a mixture of many copies of a single unknown channel.)

\begin{proposition}
  If a given Choi state $J_{\Acal\Xcal\Bcal\Ycal} \in \mathsf{D}(\Acal\Xcal\Bcal\Ycal)$ admits an $n$-symmetric extension, then there exists a free resource with Choi state
  $J^{\text{free}}_{\Acal\Xcal\Bcal\Ycal} \in \mathsf{D} (\Acal\Xcal\Bcal\Ycal)$ such that
  \begin{equation}
    \label{thmstatement}
    \left\| J_{\Acal\Xcal\Bcal\Ycal} - J^{\text{free}}_{\Acal\Xcal\Bcal\Ycal} \right\|_1 \leqslant c \sqrt{\frac{1}{n}} ,
  \end{equation}
  where $c$ is a constant that depends on the dimensions $\dop{\Acal}$, $\dop{\Bcal}$, $\dop{\Xcal}$, $\dop{\Ycal}$.
  Furthermore, if the Choi state $J_{\Acal\Xcal\Bcal\Ycal}$ satisfies additional constraints of the form of Eqs.~\eqref{Eq:ClassicalOutput} and \eqref{Eq:ClassicalInput} (from the main text) due to the specific types of the input and output systems, then it follows that $J^{\text{free}}_{\Acal\Xcal\Bcal\Ycal}$ satisfies the same constraints (and hence is of the appropriate type).
\end{proposition}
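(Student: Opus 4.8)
The plan is to obtain the statement as a direct application of the quantum de Finetti theorem for channels, Theorem~3.4 of Ref.~\cite{Berta2018}, to the $n$-symmetric extension $\overline{J}_{\Acal\Xcal(\Bcal\Ycal)_1^n}$ whose existence is assumed for the fixed value of $n$. The preceding reformulation has already done most of the groundwork: the three defining conditions of an $n$-symmetric extension (Definition~\ref{Def:ChoiStateSymExt}) have been rewritten as the linear constraints Eqs.~\eqref{Eq:ANonsignaling}--\eqref{Eq:BNonsignaling} expressing bidirectional no-signaling, together with the classicality constraints, and these are exactly of the affine form $L[\,\cdot\,]=0$ that the hypotheses of Theorem~3.4 accommodate. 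First I would verify explicitly that the full constraint list inherited from Definition~\ref{Def:ChoiNSResources}---complete positivity, trace preservation, no-signaling in both directions, and (when applicable) classicality of each input and output---matches the admissible constraint class of Ref.~\cite{Berta2018}, treating the single-party reference system $\Acal\Xcal$ as fixed and the $n$ copies $(\Bcal\Ycal)_1^n$ as the permutation-symmetric, extendible sequence to be de~Finetti-ized.

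With the hypotheses in place, Theorem~3.4 produces a channel of i.i.d.\ (product) form whose single-copy marginal approximates $J_{\Acal\Xcal\Bcal\Ycal}$. Concretely, the theorem yields a measure over single-copy B-channels which, together with the correlated reference, assembles into a convex mixture of products $\sum_i p_i\, J^i_{\Acal\Xcal}\otimes J^i_{\Bcal\Ycal}$; this is precisely the Choi state $J^{\text{free}}_{\Acal\Xcal\Bcal\Ycal}$ of a free resource in the sense of the decomposition~\eqref{Eq:RConvexDecomposition}. The same theorem supplies the quantitative trace-norm bound $\| J_{\Acal\Xcal\Bcal\Ycal} - J^{\text{free}}_{\Acal\Xcal\Bcal\Ycal}\|_1 \leqslant c\sqrt{1/n}$, with the constant $c$ inherited directly from Ref.~\cite{Berta2018} and depending only on the local dimensions $\dop{\Acal},\dop{\Bcal},\dop{\Xcal},\dop{\Ycal}$.

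For the classicality addendum, I would observe that the classicality requirements on $\Acal$, $\Xcal$, $\Bcal_n$, and $\Ycal_n$ enter merely as further linear constraints of the form in Eqs.~\eqref{Eq:AClassical}--\eqref{Eq:BClassical}, and thus lie among the constraints that Theorem~3.4 is formulated to preserve. Consequently the approximant $J^{\text{free}}_{\Acal\Xcal\Bcal\Ycal}$ automatically inherits these same constraints and is therefore a free resource of exactly the prescribed type. By the permutation symmetry of the extension it suffices to impose these constraints on a single representative copy, as all copies then satisfy the analogous constraints.

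The main obstacle I anticipate is not any computation but the faithful matching of hypotheses: confirming that the no-signaling conditions together with trace preservation and the classicality constraints collectively fit the exact template allowed by Theorem~3.4, and that the resulting de~Finetti approximant is not merely a permutation-symmetric state but a genuine Choi state of a \emph{no-signaling} resource admitting the product-with-shared-randomness decomposition~\eqref{Eq:RConvexDecomposition}. Once this bookkeeping is discharged, both the approximation bound and the type preservation follow immediately from the cited theorem.
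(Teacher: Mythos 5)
Your proposal is correct and follows essentially the same route as the paper's own proof: both invoke Theorem~3.4 of Ref.~\cite{Berta2018} on the $n$-symmetric extension, use the fact that the no-signaling and classicality conditions are linear constraints of the admissible form (embeddable in the maps $\Lambda$ and $\Gamma$), and conclude that the resulting product-form approximant $\sum_i p_i\,\sigma^i_{\Acal\Xcal}\otimes\omega^i_{\Bcal\Ycal}$ is the Choi state of a free resource of the correct type satisfying the stated trace-norm bound. The ``main obstacle'' you flag---that the approximant be a genuine no-signaling free resource rather than just a symmetric state---is discharged in the paper exactly as you anticipate, by noting that the theorem guarantees the single-party marginal conditions on $\sigma^i_{\Acal\Xcal}$ and $\omega^i_{\Bcal\Ycal}$, making each a valid single-party channel.
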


\begin{proof}
Suppose one has a resource of a specified type whose Choi state $J_{\Acal\Xcal\Bcal\Ycal}$ admits of an $n$-symmetric extension.
Then Theorem~3.4 of~{\cite{Berta2018}} states that there exists a probability distribution $\{ p_i \}$ and operators $\sigma_{\Acal\Xcal}^i$ and $\omega_{\Bcal\Ycal}^i$ such that the operator
  \begin{equation}
    J^{\text{free}}_{\Acal\Xcal\Bcal\Ycal} = \sum_i p_i (\sigma_{\Acal\Xcal}^i \otimes \omega_{\Bcal\Ycal}^i)
  \end{equation}
satisfies the bound of Eq.~\eqref{thmstatement}. It furthermore guarantees that the operators $\sigma_{\Acal\Xcal}^i$ and $\omega_{\Bcal\Ycal}^i$ satisfy
    \begin{equation}
      \Lambda_{\Acal\Xcal \rightarrow \Xcal} [\sigma_{\Acal\Xcal}^i] = X_{\Xcal} = \one_{\Xcal} / \dop{\Xcal},
  \end{equation}
  \begin{equation}
    \Gamma_{\Bcal\Ycal \rightarrow \Ycal} [\omega_{\Bcal\Ycal}^i] = Y_{\Ycal} = \one_{\Ycal_n} / \dop{\Ycal_n},
  \end{equation}
so $\sigma_{\Acal\Xcal}^i$ and $\omega_{\Bcal\Ycal}^i$ are Choi states of single party
  channels, and hence  $J^{\text{free}}_{\Acal\Xcal\Bcal\Ycal}$ is the Choi state of some free resource.
  The only sense in which the specified type of one's resource impacts this argument is that additional classicality constraints of the form in Eqs.~\eqref{Eq:AClassical} or~\eqref{Eq:BClassical} must be included for each classical input or output system.
  These constraints can be embedded in the maps $\Gamma$ and $\Lambda$ of Ref.~{\cite{Berta2018}}.
  The free resource whose existence is guaranteed by the theorem will necessarily satisfy these same constraints, and hence be of the appropriate type.
\end{proof}

If for a given $J_{\Acal\Xcal\Bcal\Ycal}$ there exists an $n$-symmetric extension for all $n$, then the resource it represents is LOSR-free, since the distance to the free set goes to zero in Eq.~\eqref{thmstatement} as $n \to \infty$.

The hierarchy above can be tightened by introducing PPT cuts~{\cite{Doherty2004,Navascues2009}}.
The additional discriminating power granted by such cuts remains to be explored.

\subsection{Special cases where the hierarchy converges after finitely many steps}

In certain cases, the hierarchy converges after a finite number of steps. Here, we generalize an observation made by Terhal et al. in Ref.~\cite{Terhal2003}.

\begin{figure}[b]
  \includegraphics{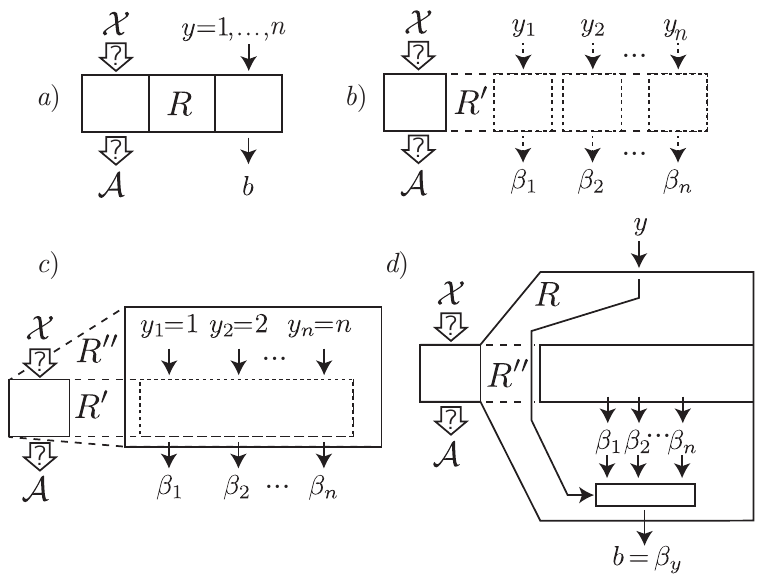}
  \caption{
    \label{Fig:FiniteExtension}
Proof that the hierarchy converges for finite $n$ for certain resource types. See text for details.
}
\end{figure}

\begin{proposition}
  \label{Prop:FiniteConvergence}
A resource of type $\Top{\Xcal}\Csf \to \Top{\Acal}\Csf$ whose output system $\Ycal$ has cardinality $\dop{\Ycal}$ has an $n$-symmetric extension for $n = \dop{\Ycal}$ if and only if it is LOSR-free.
\end{proposition}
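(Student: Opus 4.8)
The plan is to prove both directions, of which only the \emph{if} direction requires work. The \emph{only if} direction is immediate: if the resource is LOSR-free then, as established around Eq.~\eqref{Eq:ExplicitSymmetricExtension}, it admits an $n$-symmetric extension for \emph{every} $n$, and in particular for $n = \dop{\Ycal}$. So I concentrate on showing that a $\dop{\Ycal}$-symmetric extension forces LOSR-freeness. The guiding idea, generalizing Terhal et al.~\cite{Terhal2003} and depicted in Fig.~\ref{Fig:FiniteExtension}, is that because $\Ycal$ and $\Bcal$ are both classical, a symmetric extension that provides exactly one copy of Bob per possible input value lets one read off a complete deterministic response function for Bob, which then plays the role of the shared-randomness variable $\lambda$ in the decomposition~\eqref{Eq:RConvexDecomposition}.

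Concretely, I would take the $n$-symmetric extension $\overline{J}$ of Definition~\ref{Def:ChoiStateSymExt} with $n = \dop{\Ycal}$ and feed its $n$ copies of Bob the $\dop{\Ycal}$ distinct classical input values through a bijection, say copy $i$ receiving input $i$. Since $\Bcal$ is classical, the resulting joint Bob output $\lambda = (b_1, \ldots, b_{\dop{\Ycal}})$ is a classical string, which I reinterpret as the deterministic response function $f_\lambda : y \mapsto b_y$ that specifies Bob's output for each of his inputs. I let $\lambda$ range over this finite set; the corresponding single-party Bob resource $R^\lambda_{\Bcal|\Ycal}$ is the deterministic channel $y \mapsto f_\lambda(y)$.

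To extract Alice's half and assemble the decomposition, I would condition $\overline{J}$ on the event that the copies receive inputs $(1, \ldots, \dop{\Ycal})$ and return outputs $\lambda$, and trace out all Bob systems; this leaves a positive operator on Alice's Choi systems $\Acal\Xcal'$. Its weight $q_\lambda$ is independent of Alice's input, and the normalized operator $R^\lambda_{\Acal|\Xcal}$ is a legitimate single-party resource, both facts following from the no-signaling constraint~\eqref{Eq:ChoiNS} of the extension (which keeps $\Xcal'$ maximally mixed and uncorrelated after $\Acal$ is traced). It then remains to check that $\sum_\lambda q_\lambda (R^\lambda_{\Acal|\Xcal} \otimes R^\lambda_{\Bcal|\Ycal})$ equals $R$: fixing a target Bob input $y$ and tracing out the outputs of every copy $i \neq y$, no-signaling renders those copies' inputs irrelevant, so the surviving marginal is the reduced two-party object on $\Acal$ and copy $y$; by permutation symmetry~\eqref{Eq:ExtensionIsSymmetric} together with condition~2 of Definition~\ref{Def:ChoiStateSymExt}, this marginal is exactly $R$ evaluated at Bob input $y$. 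Summing over $\lambda$ then reproduces $R$ at every input, which is precisely the convex decomposition~\eqref{Eq:RConvexDecomposition} witnessing LOSR-freeness.

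The main obstacle is to phrase this \emph{assign-distinct-inputs-and-read-off-a-response-function} argument cleanly at the level of Choi states when Alice's systems $\Xcal,\Acal$ are allowed to be quantum, and, in particular, to track carefully the interplay between the permutation symmetry~\eqref{Eq:ExtensionIsSymmetric} and the no-signaling constraints~\eqref{Eq:ChoiNS} when arguing that each single-copy marginal reproduces the original resource. Two hypotheses are doing essential work and deserve emphasis in the final write-up: the output $\Bcal$ must be classical, so that $\lambda$ is a genuine classical response function on which both parties may condition (a quantum $\Bcal$ would break finite convergence and demand the full hierarchy), and the count $n = \dop{\Ycal}$ is precisely what is needed to place each of Bob's input values on its own copy via a bijection.
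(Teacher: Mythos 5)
Your proof is correct and rests on the same key idea as the paper's: feed each of the $\dop{\Ycal}$ copies of Bob a distinct classical input, read the resulting output tuple as a deterministic response function, and use it as the shared-randomness variable $\lambda$. The only difference is packaging: where you explicitly assemble the convex decomposition~\eqref{Eq:RConvexDecomposition} and verify that it reproduces $R$, the paper reaches the same conclusion more quickly by observing that the intermediate object (all copies' inputs fixed, all outputs collected into a tuple) is a resource of the $\mathsf{T}$-trivial type $\Top{\Xcal}\Isf \to \Top{\Acal}\Csf$, hence free by Proposition~\ref{Prop:Useless}, and that Bob's local selection of the $y$th output is an LOSR-free transformation carrying it back to $R$.
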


\begin{proof}
  As we have already argued, any free resource has an $n$-symmetric extension for all $n$, and hence has one for $n = \dop{\Ycal}$.
  We now prove the converse.
  Let $R$ be a resource of the type considered in the proposition, as shown in Fig.~\ref{Fig:FiniteExtension}a.
  Assume now that $R$ has an $n$-symmetric extension with $n = \dop{\Ycal}$, as shown in Fig.~\ref{Fig:FiniteExtension}b, with inputs $y_1$ to $y_n$ and outputs $\beta_1$ to $\beta_n$.
  Next, imagine that Alice and Bob share such a resource, where Alice has possession of $\Xcal$ and $\Acal$, while Bob has possession of all the other input and output systems; that is, Bob's (classical) input is $\overline{y} = (y_1, \ldots, y_n)$, and his (classical) output is $\overline{\beta} = (\beta_1, \ldots, \beta_n)$.
  In Fig.~\ref{Fig:FiniteExtension}c, we consider a local (and hence free) transformation that Bob can implement, wherein he inputs the classical values $y_1 = 1$, \ldots $y_n = n$, obtaining as his outcome a tuple of values for $\overline{\beta}$. The resulting resource $R''$ has type $\Top{\Xcal}\Isf \to \Top{\Acal}\Csf$, which is a useless type according to Prop.~\ref{Prop:Useless}.
  However, $R''$ can be converted back to $R$ using an LOSR-free transformation, as shown in Fig.~\ref{Fig:FiniteExtension}; namely, Bob just selects the value of the $y$th output ($\beta_y$), for whatever input $y$ he is given.
  Since this free transformation preserves the set of LOSR-free resources and yet generates $R$ from a free resource, $R$ itself must be free.
\end{proof}

Note that these are cases for which an exact representation exists, as described in Section~\ref{App:ExactFreeRepresentable}.

\section{Representability of monotones}
\label{App:MonotoneRepresentable}

Our hierarchy can be used to approximate the value of a monotone on any given resource, as we now illustrate using the type-independent absolute robustness as an example.

We start by rewriting the computation of the absolute robustness (Definition~\ref{Def:AbsoluteRobustness} of the main text) using the Choi representation:
\begin{multline}
  M_{\text{abs}} (R_{\Acal\Bcal|\Xcal\Ycal}) = \min s \ \ \text{s.t. } \ \frac{J_R + s J_S}{1 + s} \in \Jbf^{\rm free}_{\Acal\Bcal|\Xcal\Ycal}\;,\\
  s \geqslant 0, \qquad J_S \in \Jbf^{\rm free}_{\Acal\Bcal|\Xcal\Ycal}\;.
\end{multline}
To practically compute $ M_{\text{abs}} (R_{\Acal\Bcal|\Xcal\Ycal})$, one faces two problems.
First, the element $s J_S$ is not SDP-representable as it is a product of two variables being optimized over.
However, we can rewrite the definition as follows:
\begin{multline}
  \label{Eq:AbsRobConic}
  M_{\text{abs}} (R_{\Acal\Bcal|\Xcal\Ycal}) = \min \tr(\hat{J}_S) \ \ \text{s.t. } \ J_R + \hat{J}_S \in \hat{\Jbf}_{\Acal\Bcal|\Xcal\Ycal}^{\rm free}\;, \\
  \qquad \hat{J}_S \in \hat{\Jbf}_{\Acal\Bcal|\Xcal\Ycal}^{\rm free}\;,
\end{multline}
where we defined $\hat{J}_S = s J_S$, and thus $\tr(\hat{J}_S) = s$, and we eliminated the denominator $(1+s)$ by utilizing the conic extension  $\hat{\Jbf}_{\Acal\Bcal|\Xcal\Ycal}^{\rm free}$ (according to Definition~\ref{Def:ConicExtension}).
Second, the set $\Jbf^{\rm free}_{\Acal\Bcal|\Xcal\Ycal}$ is not necessarily LP- or SDP-representable.
This problem can be addressed by replacing the instances of $\Jbf_{\Acal\Bcal|\Xcal\Ycal}^{\rm free}$ by the SDP-representable outer approximation $\Jbf_{\Acal\Bcal|\Xcal\Ycal}^{(n)}$ for some integer $n$.
Since this means one is minimizing over a strictly larger set, the value computed in this way constitutes a lower bound $\tilde{M}_{\text{abs}}^{(n)}(R_{\Acal\Bcal|\Xcal\Ycal}) \le M_\text{abs}(R_{\Acal\Bcal|\Xcal\Ycal})$.
That is, one can compute
\begin{multline}
  \tilde{M}^{(n)}_{\text{abs}} (R_{\Acal\Bcal|\Xcal\Ycal}) = \min \tr(\hat{J}_S) \ \ \text{s.t. } \ J_R + \hat{J}_S \in \hat{\Jbf}^{(n)}_{\Acal\Bcal|\Xcal\Ycal}\;,\\
  \qquad \hat{J}_S \in \hat{\Jbf}^{(n)}_{\Acal\Bcal|\Xcal\Ycal}\;,
\end{multline}
giving a sequence of increasingly-tight lower bounds on the desired value $\tilde{M}_{\text{abs}} (R)$, where each lower bound is computable by an SDP, and such that the sequence converges on $\tilde{M}_{\text{abs}} (R)$.

Other monotones, such as the generalized robustness~\eqref{Eq:GeneralizedRobustness} and the nonlocal weight~\eqref{Eq:NonlocalWeight}, can be approximated in a similar manner.

\begin{figure}[b]
  \includegraphics{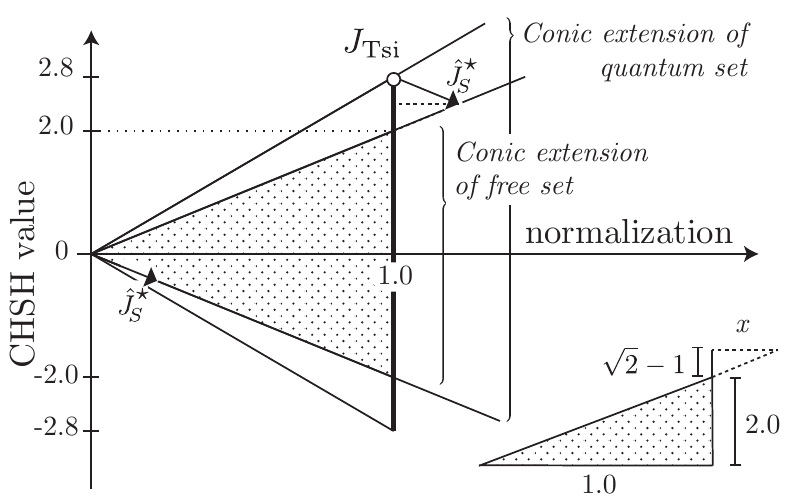}
  \caption{
    \label{Fig:AbsRobTsirelson}
  Graphical argument for computing the absolute robustness of the Tsirelson box. See text for details.
}
\end{figure}

\section{Computation of the values of $M_\text{abs}$ for the five example resources in Fig.~\ref{Fig:AbsRobPreorder} of the main text}
\label{App:AbsoluteRobustnessComputation}
Since our type-independent monotone reduces to that of Ref.~\cite{Vidal1999} for the specific case of quantum states, the absolute robustness of the singlet can be directly computed from its Schmidt coefficients using the formulas therein:
\begin{equation}
  \label{Eq:AbsoluteRobustnessPureState}
  M_\text{abs}\left( \sum_{i=1}^d \lambda_i \ket{ii} \right) = \left(\sum_{i=1}^d \lambda_i \right)^2 - 1,
\end{equation}
where, implicitly, $M_\text{abs}(\cdot)$ applies on the density matrix derived from the pure state with the given Schmid decomposition.
We also verified this value using the PPT criterion (Section~\ref{App:FreeRepresentable}).
The absolute robustness of the distributed measurement is equal to that of the singlet, as we previously proved that these resources are interconvertible under LOSR operations.

The values of the absolute robustness for the two steering assemblage examples were computed using the exact SDP representation in Section~\ref{App:FreeRepresentable}.

The values of the absolute robustness for the Tsirelson box were computed using the exact LP representation given in Section~\ref{App:FreeRepresentable}.

\subsection{A graphical understanding of absolute robustness}
We also verified this value by the following graphical argument, which corresponds to Fig.~\ref{Fig:AbsRobTsirelson}.

In the CHSH scenario with binary inputs and outputs, we define the {\em isotropic line} as the one-dimensional affine subspace that contains the Tsirelson box and the uniformly random distribution.
By extension this isotropic line also contains the PR-box.
This isotropic line corresponds to the bold line of the figure, at the normalization coordinate (abcissa) $=1.0$.
The part of the isotropic line on the CHSH value (ordinate) interval $[-2\sqrt{2}, 2\sqrt{2}]$ can be realized with quantum resources, and thus represents the quantum set.
The part of the isotropic with CHSH value interval $[-2,2]$ corresponds to the free set.
Any probability distribution can be projected on the isotropic line by symmetrization under the symmetries of the CHSH inequality ($I=\sum_{abxy} (-1)^{a+b+xy} P(ab|xy)$), and that projection does not affect the CHSH expectation value.
Thus, the search for the optimal noise to mix can be restricted to the conic extension of that subspace.
Using the definition~\eqref{Eq:AbsRobConic} of the absolute robustness, we draw the conic extensions of both the quantum realizable set and the free set; and we dot the area of the latter where it corresponds to subnormalized resources.

The Tsirelson box $J_\text{Tsi}$ can be brought into the conic extension of the free set by adding to it a subnormalized free resource $\hat{J}_S$, and we can find an optimal solution of the minimization~\eqref{Eq:AbsRobConic} in this dotted area.
We show the optimal solution graphically as $\hat{J}_S^\star$, and denote its trace by $x$.
Using the intercept theorem~\cite{Ostermann2012}, we obtain $M_\text{abs}(J_\text{Tsi}) = x = (\sqrt{2}-1)/2 \approx 0.207$.

\subsection{Extension to partially entangled states}

We now consider parameterized variants of several of the resources studied in the main text.

Consider first a generic two-qubit state
\begin{equation} \label{noisystate}
  \ket{\psi_\alpha} = \cos \alpha \ket{01} - \sin \alpha \ket{10},
\end{equation}
where $\alpha \in [0, \pi/4]$; clearly, for $\alpha = \pi/4$ one recovers the singlet state.
From Eq.~\eqref{Eq:AbsoluteRobustnessPureState}, one has $M_\text{abs}(\ketbra{\psi_\alpha}{\psi_\alpha}) = \sin 2 \alpha$, as plotted in Figure~\ref{Fig:Partial}.

Next, consider the following:
\begin{itemize}
\item An assemblage $\{\mu_{\alpha}^3(a|x)\}$ with three measurement settings $x=1,2,3$ and two measurement outcomes $a=1,2$;
\item An assemblage $\{\mu_{\alpha}^2(a|x)\}$ with two measurement settings $x=1,2$ and two measurement outcomes $a=1,2$;
\item A box $P_\alpha(ab|xy)$ with two measurement settings $x,y=1,2$ and two measurement outcomes $a,b=1,2$ for each party.
\end{itemize}
We now define a family of resources of each of these types and dimensionalities, defined by the maximum achievable $M_\text{abs}$ that can be achieved starting from the state in Eq.~\eqref{noisystate}, as one ranges over $\alpha\in[0,\pi/4]$. To compute this family of resources explicitly, we computed the projective measurement(s) which generate a resource of each of these specified types and dimensionalities.

The absolute robustness of each of these families of resources (computed using the same methods as for the specific resources in the main text) is also plotted in Figure~\ref{Fig:Partial}. One can see that the behavior of $M_\text{abs}$ mirrors that of the specific resources in the main text, in that for all $\alpha$, the state resource takes the highest value, followed by the fine-grained assemblage, while the coarse-grained assemblage and the box take the lowest values (which are equal).

Interestingly, the optimal projective measurements can be chosen so that the sequence of conversions
\begin{equation}
  \ket{\psi_\alpha} \rightarrow \mu_\alpha^3 \rightarrow \mu_\alpha^2 \rightarrow P_\alpha
\end{equation}
are possible between those optimal objects, analogous to conversion relations holding among the specific resources in the main text.

Namely, for any $\alpha\in[0,\pi/4]$, the measurements to obtain the optimal $\mu^3_\alpha$ can be chosen to be in the X basis ($\ket{\pm}$), the Y basis ($\ket{\pm i}$) and the Z basis ($\ket{0}, \ket{1}$).
The measurements to obtain the optimal $\mu^2_\alpha$ can simply be those in the X and the Z basis.
These are the same measurements defining the specific assemblages in the main text.
To transform $\mu^2_\alpha$ into the optimal box $P_\alpha$, however, the measurements are slightly adapted, corresponding to the POVM elements
\begin{equation}
  B_{b|y} = \frac{\mathbbm{1} + (-1)^b \left( \sin \theta_y \sigma_1 + \cos \theta_y \sigma_3 \right)}{2},
\end{equation}
 where $\theta_1 = \pi - \theta_0$.
We obtain $\theta_0= \pi/4$ for $\alpha=\pi/4$, with $\theta_0$ monotonically decreasing towards 0 as $\alpha \rightarrow 0$.

\begin{figure}[htb!]
  \includegraphics{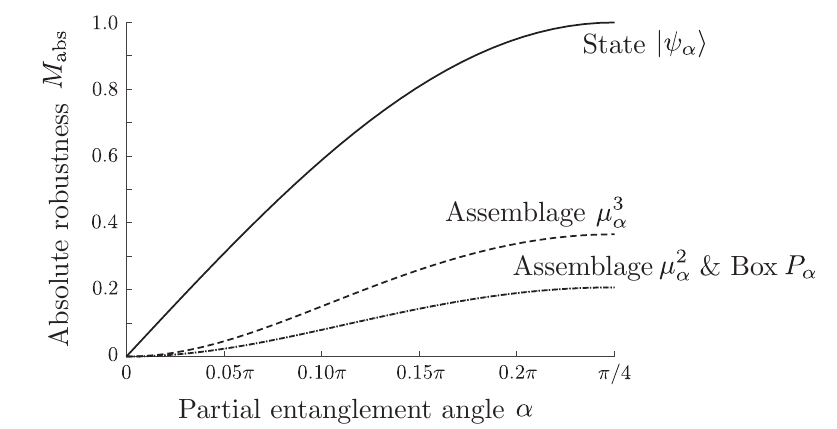}
  \caption{Extension of the examples of the main text to resources derived from partially entangled qubit states.
    \label{Fig:Partial}
}
\end{figure}

\onecolumngrid

\section{Computer code}

We kept the code simple and limited to the cases considered in our study. It runs on Matlab~\cite{MATLAB} or Octave~\cite{Octave520}, and it requires the modeling toolbox YALMIP~\cite{Lofberg2004}, an SDP solver such as SeDuMi~\cite{Sturm1999}, and the quantum information toolbox QETLAB~\cite{Johnston2016}.

We provide below the convenience functions that transform states into various resources, the functions that compute the absolute robustness for the examples presented, and finally the script that computes the graphs in Figure~\ref{Fig:Partial}.
\subsection{Convenience functions transforming resources}
\lstinputlisting[breaklines]{state_to_asm2.m}
\lstinputlisting[breaklines]{state_to_asm3.m}
\lstinputlisting[breaklines]{state_to_box.m}

\subsection{Functions computing the absolute robustness for particular types}
\lstinputlisting[breaklines]{ar_state.m}
\lstinputlisting[breaklines]{ar_asm2.m}
\lstinputlisting[breaklines]{ar_asm3.m}
\lstinputlisting[breaklines]{ar_box.m}

\subsection{Main script}
\lstinputlisting[breaklines]{script.m}

\bibliography{refs}

\end{document}